\documentclass[12pt]{article}
\usepackage{titlesec}
\usepackage{ntheorem}
\titlespacing*{\section}{0pt}{1pt}{1pt}
\titlespacing*{\subsection}{0pt}{1pt}{1pt}
\titlespacing*{\subsubsection}{0pt}{0pt}{0pt}
\usepackage{pdfpages}
\usepackage{caption}
\usepackage{subcaption}
\usepackage{colortbl}
\usepackage{float}
\usepackage{booktabs}
\usepackage{bbm}
\usepackage[normalem]{ulem}
\usepackage{charter}
\usepackage{pgf}
\usepackage{tikz}
\usetikzlibrary{backgrounds, calc, arrows.meta}
\usetikzlibrary{arrows,automata}
\usepackage[latin1]{inputenc}
\usepackage{verbatim}
\usepackage{xcolor}
\usepackage{natbib}
\colorlet{drkblue}{blue!61.8!black}
\usepackage[bookmarks=false]{hyperref}
\hypersetup{
     pdfborder = {0 0 0},
  colorlinks=true,
 citecolor=drkblue,
linkcolor=drkblue,
urlcolor=drkblue,
pdfstartview={XYZ null null 0.80}
}
\usepackage[hyperpageref]{backref}
\usepackage{amsmath}
\usepackage{amssymb}
\usepackage[left=2.5cm,top=3cm,bottom=3.5cm,right=3cm]{geometry}
\usetikzlibrary{math}

\newtheorem{theorem}{Theorem}
\newtheorem{proposition}{Proposition}
\newtheorem{lemma}{Lemma}[section]

\newtheorem{corollary}{Corollary}

\newtheorem{example}{Example}

\theorembodyfont{\rmfamily}
\theorembodyfont{\rmfamily}
\newenvironment{proof}[1][Proof:]{\noindent\textbf{#1} }

\title{Belief Identification in Populations}

 \author{\hspace*{0cm} Christopher P. Chambers\thanks{Department of Economics, Georgetown University, ICC 580  37th and O Streets NW, Washington DC 20057. E-mail: \texttt{Christopher.Chambers@georgetown.edu}.}   \and Yusufcan Masatlioglu\thanks{University of Maryland, 3147E Tydings Hall, 7343 Preinkert Dr.,  College Park, MD 20742. E-mail: \texttt{yusufcan@umd.edu.}}
   \and R. Emilio Mu\~{n}iz-Langle\thanks{Department of Economics, Georgetown University, ICC 580  37th and O Streets NW, Washington DC 20057. E-mail:\texttt{rm2104@georgetown.edu.}}
  }
 
\date{\today}

\begin{document}

\maketitle

\begin{abstract}

We study the identification of belief distributions in a population of Bayesian agents from anonymous aggregate belief data. While a single Bayesian agent's full belief can be recovered from beliefs over a suitable collection of binary events, this principle need not extend to populations: event-by-event distributions of beliefs may fail to identify the underlying distribution of priors. We study when this failure is generic and when it is exceptional. Identification is governed by the graph-theoretic structure induced by the observed family of events on the state space. Among $n$-agent distributions, identification is generic if the induced graph is nonseparable, while non-identification is generic if the graph is separable. The results establish both limits and design principles for recovering belief heterogeneity from aggregate belief data.

   (JEL: D01, D91)

Keywords: Bayesian,  Elicitation, Belief, Population Heterogeneity. 
\end{abstract}

%%==========================================================================
%% SETUP AND NOTATION
%%==========================================================================
\section{Introduction}

Beliefs play a crucial role in economics because they shape how individuals, firms, and markets make decisions under uncertainty (\cite{manski2004}). Our beliefs about the future (e.g., prices, interest rates, inflation, asset values, risk, and returns) influence spending, saving, and investment decisions (\cite{gennaioli2018}).  In this sense, beliefs are a central, though often unobserved, determinant of economic behavior (\cite{savage1971}).

Understanding what people believe is essential for policy making. Many policies are designed under uncertainty, but their success often depends on how people themselves perceive that uncertainty. A vaccination campaign, for example, may fail not because vaccines are unavailable, but because people have different beliefs about the risks of disease or the safety of treatment. Similar issues arise in education, insurance, job search, financial decisions, and responses to climate risk. In all of these cases, policy makers need to know not only what incentives people face, but also what beliefs guide their decisions (\cite{manski2018}).

An individual belief over a state space can often be inferred from conditional beliefs over subclasses of events.% Under Bayesian consistency, relative beliefs over sufficiently many binary subsets of states can be combined to recover the full belief distribution. 
Specifically, when a Bayesian reports the relative likelihood of one state to another, the relative likelihoods can, under some conditions, be used to uniquely back out the underlying belief distribution. The main condition, under an assumption of full support beliefs, required for this inverse procedure is that the pairs of states present in some event in the subclass form a \emph{connected graph}.  This is important because reporting a full probability distribution can be difficult, especially when the state space is large or complicated. Binary elicitation is typically easier to implement, less cognitively demanding, and more natural in many applications (\cite{CHARNESS2021}).

This paper asks whether an analogous principle holds at the population level. Suppose agents have heterogeneous beliefs, and suppose the analyst does not observe any individual's full beliefs. Instead, for each event in some family, the analyst observes the distribution of Bayesian updates induced by the event across the population. Importantly, the data are anonymous: the analyst observes the distribution of updated beliefs for each event, but cannot track which beliefs belong to the same individual across different events. This anonymity creates an identification problem. At the individual level, the analyst can link a person's answers across questions; at the population level, those links are missing.  Our primary question is to understand under which conditions the analyst can back out the underlying distribution of beliefs.

Answering this question has direct implications for both empirical work and policy design. If the distribution of priors can be recovered from anonymous belief data, then policymakers can study belief heterogeneity without tracking the same individuals across multiple questions. This would expand the usefulness of cross-sectional surveys, anonymized experiments, and other forms of aggregate belief data for measuring disagreement in the population. If such recovery is impossible, the limitation is equally informative: it shows that aggregate belief distributions alone are insufficient, and that identification requires additional structure, richer data, or stronger assumptions.

%The goal of this paper is to identify exactly when aggregate belief data determine the underlying distribution of priors. The paper studies the map from a population distribution of priors to the collection of observable belief distributions generated by a family of events. It asks when we can uniquely pinned down the distributions of priors. In this way, the paper connects the familiar individual-level idea of identifying beliefs through relative likelihoods to a population-level problem where the analyst observes only anonymous distributions of beliefs.

Formally, the analyst observes the following object. For each event in some family (the primitive of the model), the analyst observes the distribution of posterior beliefs about the event across the population.  Primarily we imagine a fixed and known population size.  What the analyst does not observe is the identity of the agents behind these reports. Thus, the data reveal how beliefs are distributed event by event, but they do not reveal how a particular individual's beliefs about different events are linked. The main question is whether this anonymous collection of belief distributions is enough to recover the underlying distribution of priors in the population.

Our first result establishes a sharp limitation. If the analyst does not observe beliefs about the full state space, then identification can fail. There always exist distinct distributions of priors that generate the same observable data. The strength of this result is worth emphasizing: non-identification can arise even when the analyst observes posterior-belief distributions for an extremely rich collection of events, including all proper subsets of the entire state space. %Thus, observing ``almost'' the entire state space is not necessarily a substitute for observing the full state space itself when we identify the beliefs of Bayesian population.

The intuition is that anonymous data reveal only marginal distributions of posterior beliefs across events. Without observing beliefs about the full state space, the analyst may be unable to pin down how the different pieces of the belief distribution fit together. Even if every large subset of states is observed, the missing full event can leave enough freedom to construct two different population distributions of priors that agree on every observed event but differ globally. This result highlights a fundamental contrast between individual and population-level identification.

% Given \(\Sigma\), define a graph on the state space \(X\) by connecting two states \(x\) and \(y\) whenever they appear together in some observed event. Formally, we say that two states \(x\) and \(y\) are \emph{relatable}, written $x\sim_{\Sigma} y,$ if there exists an event $E\in\Sigma$ including both states, i.e.,
% $\{x,y\}\subseteq E$. This relation induces a graph on the state space \(X\): two states are connected whenever they appear together in some observed event. The graph therefore records which pairs of states are jointly compared in the data. Our analysis shows that identification depends critically on the structure of this graph. In particular, the key distinction is whether the graph is separable or nonseparable. Roughly speaking, a graph is separable if it can be divided into two smaller subgraphs that share exactly one state. 

While our first result establishes the existence of non-identification, this does not tell us whether non-identification is widespread or exceptional. The next results address precisely this question. Contrary to the first result, we provide a complementary positive result for $n$-agent distributions.\footnote{An $n$-agent distribution is one that could arise as the empirical
frequency of exactly $n$ agents. Specifically, every atom of $n$-agent distribution
must have a mass that is an integer multiple of $\tfrac{1}{n}$.} We show that the answer is governed by a simple graph-theoretic structure induced by the family of observed events.  We record the pairs of states whose probability ratios are observed in the data---any pair belonging to any elicited event. Our analysis shows that identification depends critically on whether the graph of this relation is separable or nonseparable. Roughly, a graph is separable if it can be divided into two smaller subgraphs that share exactly one state.  Unlike the single-agent case, connectedness is not enough.

When the graph induced by the observed data is nonseparable, identification is \emph{generic} for a given cardinality of agents: the set of identified distributions is open and dense. In this sense, non-identification is exceptional rather than typical. %A generic set is large from a topological point of view: it contains an open neighborhood around each of its points and comes arbitrarily close to every point in the grand space. 
Thus, as long as the induced graph is nonseparable, a sufficiently rich survey design makes non-uniqueness a non-generic concern. Put differently, when the family of observed events is ``rich enough'', one should not typically expect distinct distributions of priors to generate the same observable data. 

%{\color{red}CHRIS: SHOULD FOLLOWING PARAGRAPH GO IN MAIN TEXT?  IT'S NOT CLEAR THAT IT HELPS HERE, IN THE MAIN TEXT WE CAN DRAW A PICTURE OF THE CYCLE.  IT SHOULD BE OBVIOUS THE MINIMAL NONSEPARABLE GRAPH INDUCED BY BINARIES IS THE CYCLE, WHICH ONLY ADDS ONE MORE EVENT TO THE MINIMAL SIZE GRAPH OF BINARIES GIVING IDENTIFICATION FOR ONE PERSON.}

%It is important to note, however, that generic identification does not require eliciting large events. For instance, observing only binary events of the form $E_i=\{x_i,x_{i+1}\}$,  $i=1,\ldots,n-1$ together with $E_n=\{x_n,x_1\},$ is already sufficient for the relevant graph-theoretic condition to hold. These events generate a cycle on the state space. Hence, even though each elicited event contains only two states, the induced graph is nonseparable. This illustrates that what matters is not the size of each elicited event, but the way the elicited events connect the state space. A carefully designed collection of binary questions may therefore provide more identifying power than a seemingly richer collection of large events with the wrong graph structure.

%{\color{blue}CHRIS: AGAIN THE FOLLOWING RESULT USES NOTATION.  LET'S TALK ABOUT IT.}

    Our next result shows that the design of the elicited events is crucial. In contrast to the nonseparable case, if the induced graph separable, then non-identification is generic. %In the separable case, non-identification is large: it persists under small perturbations of the underlying distribution and arises arbitrarily close to every distribution in the grand space. This result highlights the importance of survey design. I
    It is not enough for the analyst to elicit beliefs about many events; the events must also connect the state space in the right way. %By contrast, designing \(\Sigma\) so that the induced graph is nonseparable eliminates this generic source of non-identification. 
    We then extend the analysis to the class of simple (finitely-supported) distributions. Enlarging the domain refines the picture in an important way:  non-separability of the induced graph no longer leads to generic identification.  In fact, both the set of beliefs for which identification prevails and the set of beliefs for which non-identification prevails are dense.  If the graph is separable, the set of non-identified distributions remains open and dense. % If $(X,\sim_\Sigma)$ is non-separable and $X \notin \Sigma$, both the set of identified and the set of non-identified distributions are dense --- neither is topologically negligible. If $(X,\sim_\Sigma)$ is separable, the set of non-identified distributions is open and dense.

The results show that identification from anonymous aggregate belief data is governed by the graph structure induced by family of observed events. The number or size of elicited events is not the primitive determinant of identification.  What matters is whether the events connect the state space in a way that prevents separation. When the induced graph is nonseparable, identification is generic for finite-type populations. When it is separable, non-identification is generic. Thus, the paper provides both a limitation and a design principle: anonymous belief data can be informative about population heterogeneity, but only when the elicited events are organized so as to connect the state space in the right way.

\section{The model}
Let $X$ denote the finite set of \emph{states}, and let $n\in\mathbb{N}$ denote the number of \emph{agents}.  We assume that $|X|\geq 3$, otherwise the questions asked in this paper become uninteresting.  $X$ is given the discrete topology and $\Delta(X)$ its weak topology (here simply the Euclidean topology on the simplex).

A simple probability $\pi$ on $\Delta(X)$ is an \emph{$n$-agent distribution} if for all $p\in\Delta(X)$, $\pi(\{p\})\in\{0,\tfrac{1}{n},\ldots,\tfrac{n-1}{n},1\}$.  That is, an $n$-agent distribution is one that could arise as the empirical frequency of exactly $n$ agents.  Denote the set of $n$-agent distributions by $\Delta^n(\Delta(X))$.  More generally, denote the set of finitely-supported (simple) distributions on $\Delta(X)$ by $\Delta^S(\Delta(X))$.  Observe that $\Delta^n(\Delta(X))\subseteq \Delta^S(\Delta(X))$.  Similarly define $\Delta^n(\Delta_{++}(X))$ and $\Delta^S(\Delta_{++}(X))$. The following example illustrates the notation. 
\begin{example}\label{ex:simplex} Let $X = \{1, 2, 3\}$, so $\Delta(X)$ is the set of beliefs over three states. A point $p \in \Delta(X)$ takes the form $(p_1, p_2, p_3)$ with $p_1 + p_2 + p_3 = 1$. A distribution $\pi \in \Delta^3(\Delta(X))$ assigns probability mass to such beliefs and can be interpreted as describing a population of agents, each holding some belief over $X$. For example, 
\[
  \pi\bigl(\{(\tfrac{1}{2}, \tfrac{1}{3}, \tfrac{1}{6})\}\bigr) = \tfrac{2}{3},
  \qquad
  \pi\bigl(\bigl\{\bigl(\tfrac{1}{4}, \tfrac{1}{4}, \tfrac{1}{2}\bigr)\bigr\}\bigr) = \tfrac{1}{3}.
\]
Both masses are multiples of $\tfrac{1}{3}$, so $\pi \in \Delta^3(\Delta(X))$. Now consider
\[
  \pi'\bigl(\{(1, 0, 0)\}\bigr) = \tfrac{1}{2},
  \qquad
  \pi'\bigl(\{(0, 1, 0)\}\bigr) = \tfrac{1}{2}.
\] Since $\tfrac{1}{2}$ is not a multiple of $\tfrac{1}{3}$, no assignment of exactly $3$ agents
to types could produce this distribution, so $\pi' \notin \Delta^3(\Delta(X))$. It is, however, a $2$-agent distribution: $\pi' \in \Delta^2(\Delta(X))$ (but not in $\Delta^2(\Delta_{++}(X))$). Both distributions above are finitely supported; hence, both belong to $\Delta^S(\Delta(X))$. The set $\Delta^S(\Delta(X))$ does not restrict the size of the atoms, whereas $\Delta^n(\Delta(X))$ requires every atom to be a multiple of $\tfrac{1}{n}$.
\end{example}

Recall that the \emph{weak topology} on $\Delta(\Delta(X))$ is the smallest topology for which, for each bounded continuous $f:\Delta(X)\rightarrow\mathbb{R}$, the map $\pi\mapsto \int f(p)\,d\pi(p)$ is continuous.  We equip $\Delta^n(\Delta_{++}(X))$ with the relative weak topology.

Next, let $\Sigma\subseteq 2^X\setminus\{\varnothing\}$ be a family of events. For $p\in\Delta_{++}(X)$ and $E\in\Sigma$, write $p(\cdot|E)\in\Delta_{++}(E)$ for the Bayesian update: $p(F|E)=p(F)/p(E)$ for $F\subseteq E$.  We say that $\pi\in\Delta^S(\Delta_{++}(X))$ is \emph{identified by $\Sigma$} if, whenever $\pi'\in\Delta^S(\Delta_{++}(X))$ satisfies: for every $E\in\Sigma$ and every finite set $P\subseteq \Delta_{++}(X)$,
\[
  \pi'\bigl(\{p\in\Delta_{++}(X):p(\cdot|E)\in P\}\bigr)
  = \pi\bigl(\{p\in\Delta_{++}(X):p(\cdot|E)\in P\}\bigr),
\]
then $\pi'=\pi$.  Otherwise $\pi$ is \emph{not identified by $\Sigma$}.

Given $\Sigma$, define an undirected graph $(X,\sim)$ on vertex set $X$ by setting $x \sim y$ (suppressing the dependence on $\Sigma$) if there exists $E\in\Sigma$ with $\{x,y\}\subseteq E$. The graph records only which pairs of states coexisting in some event, it disregards both the size and number of events that generate it. Many distinct families of events therefore induce the same graph. Figure~\ref{fig:induced} illustrates the construction. Adding a subset of an event present in $\Sigma$ does not change the graph, whereas adding a new event that nontrivially intersects two already present events can add new edges.

%%====================================================================
%%  FIGURE: how a family of events induces the graph, and how observed
%%  sets do or do not change its non-separable structure.
%%
%%  DROP-IN (Section 2).  Assumes the main preamble already provides:
%%    - \usetikzlibrary{backgrounds, calc, arrows.meta}
%%  This version defines a compact LOCAL layout (does not use \vpos),
%%  is scaled down to sit as a normal in-text figure, places the three
%%  panels in one row, and uses a SINGLE caption.
%%====================================================================

\begin{figure}[h!]
\centering

\tikzset{
  state/.style  = {circle, draw=black, fill=white,
                   minimum size=15pt, inner sep=0pt,
                   font=\scriptsize\bfseries, thick},
  cutvtx/.style = {circle, draw=black, fill=gray!40,
                   minimum size=15pt, inner sep=0pt,
                   font=\scriptsize\bfseries, thick},
  edge/.style   = {black, semithick},
  newedge/.style= {orange!85!black, line width=1.3pt},
  compA/.style  = {fill=blue!12,  rounded corners=7pt},
  compB/.style  = {fill=red!12,   rounded corners=7pt},
  compC/.style  = {fill=green!14, rounded corners=7pt},
  compBIG/.style= {fill=red!12,   rounded corners=7pt}
}

%% compact six-vertex layout, local to this figure
\newcommand{\vlayout}{%
  \coordinate (v1) at (-3.0, 0.62);
  \coordinate (v2) at (-3.0,-0.62);
  \coordinate (v3) at (-1.9, 0.0);
  \coordinate (v4) at (-0.8, 0.0);
  \coordinate (v5) at ( 0.3, 0.62);
  \coordinate (v6) at ( 0.3,-0.62);
}

\begin{tikzpicture}[scale=1.2]
%% ---------- (i) base graph ----------
\begin{scope}[shift={(0,0)}]
  \vlayout
  \begin{scope}[on background layer]
    \fill[compA] ($(v1)+(-0.26,0.24)$) -- ($(v2)+(-0.26,-0.24)$) --
                 ($(v3)+(0.04,-0.23)$) -- ($(v3)+(0.04,0.23)$) -- cycle;
    \fill[compB] ($(v3)+(-0.02,0.15)$) -- ($(v4)+(0.02,0.15)$) --
                 ($(v4)+(0.02,-0.15)$) -- ($(v3)+(-0.02,-0.15)$) -- cycle;
    \fill[compC] ($(v4)+(-0.04,0.23)$) -- ($(v5)+(0.26,0.24)$) --
                 ($(v6)+(0.26,-0.24)$) -- ($(v4)+(-0.04,-0.23)$) -- cycle;
  \end{scope}
  \draw[edge] (v1)--(v2); \draw[edge] (v1)--(v3); \draw[edge] (v2)--(v3);
  \draw[edge] (v3)--(v4);
  \draw[edge] (v4)--(v5); \draw[edge] (v4)--(v6); \draw[edge] (v5)--(v6);
  \node[state]  at (v1) {1}; \node[state]  at (v2) {2};
  \node[cutvtx] at (v3) {3}; \node[cutvtx] at (v4) {4};
  \node[state]  at (v5) {5}; \node[state]  at (v6) {6};
  \node[font=\scriptsize] at (-1.35,-1.15) {$(X,\sim)$};
\end{scope}
%% ---------- (a) add subset ----------
\begin{scope}[shift={(4.5,0)}]
  \vlayout
  \begin{scope}[on background layer]
    \fill[compA] ($(v1)+(-0.26,0.24)$) -- ($(v2)+(-0.26,-0.24)$) --
                 ($(v3)+(0.04,-0.23)$) -- ($(v3)+(0.04,0.23)$) -- cycle;
    \fill[compB] ($(v3)+(-0.02,0.15)$) -- ($(v4)+(0.02,0.15)$) --
                 ($(v4)+(0.02,-0.15)$) -- ($(v3)+(-0.02,-0.15)$) -- cycle;
    \fill[compC] ($(v4)+(-0.04,0.23)$) -- ($(v5)+(0.26,0.24)$) --
                 ($(v6)+(0.26,-0.24)$) -- ($(v4)+(-0.04,-0.23)$) -- cycle;
  \end{scope}
  \draw[edge] (v1)--(v2); \draw[edge] (v1)--(v3); \draw[edge] (v2)--(v3);
  \draw[edge] (v3)--(v4);
  \draw[edge] (v4)--(v5); \draw[edge] (v4)--(v6); \draw[edge] (v5)--(v6);
  \draw[orange!85!black, line width=0.6pt]
       ($(v1)+(0.05,0)$)--($(v2)+(0.05,0)$);
  \node[state]  at (v1) {1}; \node[state]  at (v2) {2};
  \node[cutvtx] at (v3) {3}; \node[cutvtx] at (v4) {4};
  \node[state]  at (v5) {5}; \node[state]  at (v6) {6};
  \node[font=\scriptsize] at (-1.35,-1.15) {(a) add $\{1,2\}$};
\end{scope}
%% ---------- (b) add crossing set ----------
\begin{scope}[shift={(9.0,0)}]
  \vlayout
  \begin{scope}[on background layer]
    \fill[compA] ($(v1)+(-0.26,0.24)$) -- ($(v2)+(-0.26,-0.24)$) --
                 ($(v3)+(0.04,-0.23)$) -- ($(v3)+(0.04,0.23)$) -- cycle;
    \fill[compBIG] ($(v3)+(-0.04,0.27)$) -- ($(v5)+(0.26,0.27)$) --
                   ($(v6)+(0.26,-0.27)$) -- ($(v3)+(-0.04,-0.27)$) -- cycle;
  \end{scope}
  \draw[edge] (v1)--(v2); \draw[edge] (v1)--(v3); \draw[edge] (v2)--(v3);
  \draw[edge] (v3)--(v4);
  \draw[edge] (v4)--(v5); \draw[edge] (v4)--(v6); \draw[edge] (v5)--(v6);
  \draw[newedge] (v3) to[bend left=20] (v5);
  \node[state]  at (v1) {1}; \node[state]  at (v2) {2};
  \node[cutvtx] at (v3) {3}; \node[state]  at (v4) {4};
  \node[state]  at (v5) {5}; \node[state]  at (v6) {6};
  \node[font=\scriptsize] at (-1.35,-1.15) {(b) add $\{3,5\}$};
\end{scope}
\end{tikzpicture}

\caption{\footnotesize Two states are joined whenever some event of $\Sigma$
contains both, so the induced graph $(X,\sim)$ records only pairwise
co-occurrences. Here
$X=\{1,2,3,4,5,6\}$ and $\Sigma=\{\{1,2,3\},\{3,4\},\{4,5,6\}\}$ yields a triangle, an edge, and a
triangle: three maximal non-separable subgraphs (shaded), joined at states
$3$ and $4$, each of whose removal disconnects the graph. Adding a subset of
an existing event (a) induces no new edge and leaves the subgraphs unchanged,
whereas adding a set whose elements cross two subgraphs (b) induces the edge
$\{3,5\}$ that merges them, so that $4$ no longer disconnects the graph.}
\label{fig:induced}

\end{figure}

The following result captures identification for a single Bayesian agent. Since it is standard, we state it without proof.\footnote{To generalize beyond the case of $\Delta_{++}(X)$ to $\Delta(X)$, we would need to assume that $(X,\sim)$ is the complete graph.}

\begin{proposition}\label{prop:connected}The following are equivalent:
\begin{enumerate}
\item $(X,\sim)$ is a connected graph.
\item For every $p,q\in\Delta_{++}(X)$, if $p(\cdot|E)=q(\cdot|E)$ for all $E\in\Sigma$, then $p=q$.
\end{enumerate}
\end{proposition}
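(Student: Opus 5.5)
We prove the two implications separately. For both, we work with log-ratios on the interior of the simplex. For $p\in\Delta_{++}(X)$ and an edge $x\sim y$, choose $E\in\Sigma$ with $\{x,y\}\subseteq E$. Since $p(x|E)/p(y|E)=p(x)/p(y)$, the conditional $p(\cdot|E)$ determines the ratio $p(x)/p(y)$; the same holds for $q$. Full support guarantees that every such ratio is well defined and strictly positive.

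\emph{(1) implies (2).} Assume $(X,\sim)$ is connected and $p(\cdot|E)=q(\cdot|E)$ for all $E\in\Sigma$. By the observation above, $p(x)/p(y)=q(x)/q(y)$ whenever $x\sim y$. Fix a state $x_0$. For any $x$, connectedness gives a path $x_0=z_0\sim z_1\sim\cdots\sim z_k=x$. Multiplying the equal ratios along this path telescopes to $p(x)/p(x_0)=q(x)/q(x_0)$. Hence $p=c\,q$ with $c=p(x_0)/q(x_0)>0$, and summing over $X$ forces $c=1$, so $p=q$.

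\emph{(2) implies (1).} We argue by contrapositive. Suppose $(X,\sim)$ is disconnected, and let $C$ be one connected component, so that $\varnothing\neq C\subsetneq X$. No event $E\in\Sigma$ meets both $C$ and $X\setminus C$, since any such event would contain a pair $x\in C$, $y\notin C$, making them adjacent. Hence every $E\in\Sigma$ lies inside $C$ or inside $X\setminus C$. Take any $p\in\Delta_{++}(X)$ and pick $t\in(0,1)$ with $t\neq p(C)$. Define $q$ by
\[
q(x)=\frac{t}{p(C)}\,p(x)\ \text{ for } x\in C,\qquad q(x)=\frac{1-t}{1-p(C)}\,p(x)\ \text{ otherwise.}
\]
Then $q\in\Delta_{++}(X)$ and $q\neq p$. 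Because each event rescales $p$ by a single constant, Bayesian updating cancels the constant, so $q(\cdot|E)=p(\cdot|E)$ for every $E\in\Sigma$. This contradicts (2).

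Neither direction has a real obstacle. The only points that need care are that full support makes all the ratios well defined, and that each event of $\Sigma$ sits entirely within one side of the component split. Singleton events carry no information, and this is consistent with the argument, since they contribute no edges to $(X,\sim)$.
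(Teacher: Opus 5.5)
Your proof is correct in both directions, including the two checks that matter: $0<p(C)<1$ makes the rescaled $q$ a full-support probability distinct from $p$, and every event of $\Sigma$ lies inside a single component. The paper states this proposition without proof as standard, and your argument is the expected one: the path-telescoping in (1)$\Rightarrow$(2) is the same root-and-path construction of $u$ used in the proof of Lemma~\ref{lem:consistent}, and the componentwise rescaling in (2)$\Rightarrow$(1) is the single-agent version of the paper's footnote that non-identification is universal when $(X,\sim)$ is disconnected.
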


Proposition~\ref{prop:connected} motivates our study. For a single Bayesian agent, a probability measure over a finite set is uniquely pinned down---that is, it is \emph{identified}---by its relative likelihoods.  In fact, according to the proposition, we can identify any such probability distribution from its relative likelihoods on a restricted domain of pairs---generally for a set $X$, we only need $|X|-1$ such pairs to identify $p$. For example, when $X=\{1,2,3\}$, observing only the conditional distributions associated with $\Sigma=\bigl\{\{1,3\},\{2,3\}\bigr\}$ is sufficient to identify the belief over the full state space for a single Bayesian agent. Henceforth, for simplicity, we make the following assumption:\footnote{In case $(X,\sim)$ is not connected, non-identification is universal for full support profiles.}

\textbf{Assumption:}  $(X,\sim)$ is a connected graph.

%we assume $(X,\sim)$ is a connected graph. 
The following example shows Proposition~\ref{prop:connected} does not hold for a population of Bayesian agents.

\begin{example}\label{ex:non-identification}  Let $\Sigma=\{ \{1,3\},\{2,3\}\}$ and $p^1=(\tfrac{1}{2}, \tfrac{1}{3}, \tfrac{1}{6})$ and $p^2=\bigl(\tfrac{1}{4}, \tfrac{1}{4}, \tfrac{1}{2}\bigr)$, and 
$\pi\bigl(\{p^1\}\bigr) = \tfrac{2}{3}$ and $ \pi\bigl(\{p^2 \}\bigr) = \tfrac{1}{3}$ as in Example \ref{ex:simplex}.
We now show that $\pi$ is \emph{not identified by $\Sigma$}. First consider $q^1=p^1$, $q^2=(\tfrac{6}{9}, \tfrac{1}{9}, \tfrac{2}{9})$ and $q^3=\bigl(\tfrac{1}{7}, \tfrac{4}{7}, \tfrac{2}{7}\bigr)$. Define $\pi'$ such that $\pi'\bigl(\{q^i\}\bigr)= \tfrac{1}{3}$.
  Under $E=\{1,3\}$, $p^1(\cdot|E)=q^1(\cdot|E)=q^2(\cdot|E)$, and $p^2(\cdot|E)=q^3(\cdot|E)$. Similarly, under $E'=\{2,3\}$, $p^1(\cdot|E')=q^1(\cdot|E')=q^3(\cdot|E')$, and $p^2(\cdot|E')=q^2(\cdot|E')$ (see Figure \ref{fig:simplex-nonidentification}). Hence, while both $\pi $ and $ \pi'$ induce the same distribution on $\Sigma$, they are distinct, $\pi \neq \pi'$. 
\end{example}

\begin{figure}[h!]
    \centering
    \begin{tikzpicture}[scale=2, >=stealth]

%% ---------------------------------------------------------------
%% Coordinates
%% Equilateral triangle, side 4; barycentric → Cartesian:
%%   x = 2(p_3 - p_2),  y = 2√3 · p_1
%% ---------------------------------------------------------------

% Vertices
\coordinate (V1) at ( 0,     3.464);  % state 1 (top)
\coordinate (V2) at (-2,     0);      % state 2 (bottom-left)
\coordinate (V3) at ( 2,     0);      % state 3 (bottom-right)

% Lottery points
\coordinate (P1)   at (-0.333, 1.732); % p^1 = q^1 = (1/2, 1/3, 1/6)
\coordinate (P2)   at ( 0.5,   0.866); % p^2       = (1/4, 1/4, 1/2)
\coordinate (Q2pt) at ( 0.222, 2.309); % q^2       = (2/3, 1/9, 2/9)
\coordinate (Q3pt) at (-0.571, 0.495); % q^3       = (1/7, 4/7, 2/7)

% Iso-conditional endpoints on event edges
% E = {1,3}  →  edge V1–V3  (p_2 = 0)
\coordinate (EA1) at ( 0.5,   2.598); % p(·|E) = (3/4, 1/4)
\coordinate (EA2) at ( 1.333, 1.155); % p(·|E) = (1/3, 2/3)
% E'= {2,3}  →  edge V2–V3  (p_1 = 0)
\coordinate (EB1) at (-0.667, 0);     % p(·|E')= (2/3, 1/3)
\coordinate (EB2) at ( 0.667, 0);     % p(·|E')= (1/3, 2/3)

%% ---------------------------------------------------------------
%% Iso-conditional lines (drawn first, behind simplex boundary)
%% ---------------------------------------------------------------

% E = {1,3}: rays from V2  (blue, solid)
\draw[cyan!50, thick, dashed] (V2) -- (EA1);
\draw[cyan!50, thick, dashed] (V2) -- (EA2);

% E'= {2,3}: rays from V1  (red, dashed)
\draw[red!55, thick, dashed] (V1) -- (EB1);
\draw[red!55, thick, dashed] (V1) -- (EB2);

%% ---------------------------------------------------------------
%% Simplex boundary
%% ---------------------------------------------------------------
\draw[very thick] (V1) -- (V2) -- (V3) -- cycle;

% % Highlight event edges
% \draw[blue!75, line width=2.2pt] (V1) -- (V3); % E  = {1,3}
% \draw[red!75,  line width=2.2pt] (V2) -- (V3); % E' = {2,3}

%% ---------------------------------------------------------------
%% Vertex labels
%% ---------------------------------------------------------------
\node[above,       font=\small\bfseries] at (V1) {$1$};
\node[below left,  font=\small\bfseries] at (V2) {$2$};
\node[below right, font=\small\bfseries] at (V3) {$3$};

%% ---------------------------------------------------------------
%% Event-edge labels
%% ---------------------------------------------------------------
\node[font=\scriptsize, cyan!80, rotate=-60] at (1.55, 2.15)
    {$E=\{1,3\}$};
\node[font=\scriptsize, red!80] at (0, -0.35)
    {$E'=\{2,3\}$};

%% ---------------------------------------------------------------
%% Conditional-value labels at ray endpoints on edges
%% ---------------------------------------------------------------
% \filldraw[blue!55] (EA1) circle (1.5pt);
% \node[right=2pt, font=\tiny, blue!65] at (EA1)
%     {$\bigl(\tfrac{3}{4},\tfrac{1}{4}\bigr)$};

% \filldraw[blue!55] (EA2) circle (1.5pt);
% \node[right=2pt, font=\tiny, blue!65] at (EA2)
%     {$\bigl(\tfrac{1}{3},\tfrac{2}{3}\bigr)$};

% \filldraw[red!55] (EB1) circle (1.5pt);
% \node[below=2pt, font=\tiny, red!65] at (EB1)
%     {$\bigl(\tfrac{2}{3},\tfrac{1}{3}\bigr)$};

% \filldraw[red!55] (EB2) circle (1.5pt);
% \node[below=2pt, font=\tiny, red!65] at (EB2)
%     {$\bigl(\tfrac{1}{3},\tfrac{2}{3}\bigr)$};

%% ---------------------------------------------------------------
%% Atoms of π  (blue filled discs; larger = more weight)
%% ---------------------------------------------------------------
\filldraw[fill=blue!25, draw=blue!80, thick] (P1) circle (2pt); % w=2/3
\filldraw[fill=blue!25, draw=blue!80, thick] (P2) circle (1pt); % w=1/3

%% ---------------------------------------------------------------
%% Atoms of π' (green rings; q^1 = p^1 drawn as outer ring)
%% ---------------------------------------------------------------
\draw[fill=green!50!black, green!50!black,  thick] (P1) circle (1pt);            % q^1, w=1/3
\filldraw[fill=green!50!black, draw=green!50!black, thick]
    (Q2pt) circle (1.0pt);                                         % q^2, w=1/3
\filldraw[fill=green!50!black, draw=green!50!black, thick]
    (Q3pt) circle (1.0pt);                                         % q^3, w=1/3

%% ---------------------------------------------------------------
%% Point labels
%% ---------------------------------------------------------------
\node[left=3pt,  font=\scriptsize, green!50!black] at (P1)   {$q^1$};
\node[right=4pt,  font=\scriptsize, blue!80] at (P1)   {$p^1$};
\node[right=2pt,  font=\scriptsize, blue!80] at (P2)   {$p^2$};
\node[left=2pt,  font=\scriptsize, green!50!black] at (Q2pt) {$q^2$};
\node[left=2pt,   font=\scriptsize, green!50!black] at (Q3pt) {$q^3$};

%% ---------------------------------------------------------------
%% Legend
%% ---------------------------------------------------------------
% \begin{scope}[shift={(-2.7, -0.85)}, font=\tiny]
%     \draw[blue!55, thick]
%         (0, 0.50) -- (0.45, 0.50)
%         node[right] {iso-cond.\ under $E=\{1,3\}$};
%     \draw[red!55, thick, dashed]
%         (0, 0.22) -- (0.45, 0.22)
%         node[right] {iso-cond.\ under $E'=\{2,3\}$};
%     \filldraw[fill=blue!25, draw=blue!80, thick]
%         (0.22, -0.08) circle (2.8pt)
%         node[right=4pt] {atom of $\pi$ \ (size $\propto$ weight)};
%     \draw[green!50!black, thick]
%         (0.22, -0.36) circle (2.8pt)
%         node[right=4pt] {atom of $\pi'$};
% \end{scope}

\end{tikzpicture}
\caption{\footnotesize{Illustration of non-identification in Example \ref{ex:non-identification}. Each point in the simplex represents a belief over states of the world: $\{1,2,3\}$. Blue discs are the atoms of $\pi$ (area proportional to weight), and green discs are the atoms of $\pi'$.  The point $p^1=q^1$ carries weight $\tfrac{2}{3}$ under $\pi$ and $\tfrac{1}{3}$ under $\pi'$. Dashed lines are iso-conditional rays for $E=\{1,3\}$ and $E'=\{2,3\}$. All beliefs on a given ray share the same conditional. $\{p^1\}$ and $\{q^1,q^2\}$ receive total
weight $\tfrac{2}{3}$ under both $\pi$ and $\pi'$, and likewise for the
remaining rays.  Hence $\pi$ and $\pi'$ induce identical distributions over
conditionals for every $E\in\Sigma=\bigl\{\{1,3\},\{2,3\}\bigr\}$, so
$\pi$ is not identified by~$\Sigma$.}}
\label{fig:simplex-nonidentification}
\end{figure}

 The example shows that the identification argument does not extend directly to a population of Bayesian agents.  The difficulty arises when the iso-conditional rays associated with different events align across distinct atoms in the same way. In that case, one can construct an alternative distribution \(\pi'\) by splitting the atoms of \(\pi\) and recombining their masses along these rays. This transformation preserves every observable conditional distribution, while changing the underlying distribution over full beliefs.

Thus, the failure of identification is not a knife-edge coincidence. Rather, it reflects a fundamental limitation of \(\Sigma\) as an identifying device: whenever the events in \(\Sigma\) do not jointly separate all pairs of beliefs in the support, the distribution over full-state beliefs cannot, in general, be recovered from the observed conditionals alone.

Additional information can improve identification. For example, adding $E''=\{1,2\}$ to $\Sigma=\{ \{1,3\},\{2,3\}\}$ improves the identification. This event introduces a third family of iso-conditional rays, emanating from vertex~$3$ (the state outside $E''$), which cross-cuts the existing groupings.  Under $E''$, one computes $p^1(\cdot|\{1,2\})=(\tfrac{3}{5},\tfrac{2}{5},0)$ while $q^2(\cdot|\{1,2\})=(\tfrac{6}{7},\tfrac{1}{7},0)$, so $p^1$ and $q^2$---which lie on the same ray from vertex~$2$ under $E=\{1,3\}$---are now separated.  Similarly, $p^2(\cdot|\{1,2\})=(\tfrac{1}{2},\tfrac{1}{2},0)$ and $q^3(\cdot|\{1,2\})=(\tfrac{1}{5},\tfrac{4}{5},0)$ differ, separating $p^2$ and $q^3$ as well.  The non-identification failure exploited the fact that mass could be split and recombined along the rays of $\{1,3\}$ and $\{2,3\}$ without disturbing either observable; the rays of $\{1,2\}$ block this maneuver by imposing additional constraints that no alternative $\pi'\neq\pi$ can simultaneously satisfy.  This illustrates a broader principle: the critical question is whether the events in $\Sigma$ jointly separate all pairs of lotteries in the support of~$\pi$.

\section{Non-identification}

In this section, we first illustrate  that, regardless of how rich $\Sigma$
 is, the mere absence of the grand menu $X$ from $\Sigma$ is sufficient to
 guarantee the existence of a non-identified distribution.  The example
 below illustrates this for $|X|=3$.

% We have shown that if $\sim_{\Sigma}$ is nonseparable, then generic identification obtains.  Still, identification is not global.  Let us refer to $\Delta^{n,u}(\Delta_{++}(X))=\{\pi\in\Delta(\Delta_{++}(X)):\mbox{ for all }p,\pi(p)\in\{0,\frac{1}{n}\}\}$.  This set is the image of $\Delta^u(X)^n$ under $\psi$. The next example illustrates that there is always a non-identified $\pi$ independent of how rich $\Sigma$ as long as $X$ does not belongs to $\Sigma$.

% Theorem~\ref{thm:separable} establishes that generic identification holds
% whenever $\sim_{\Sigma}$ is nonseparable.  Generic, however, does not mean
% universal: non-identified distributions always exist.  To state the next
% result precisely, let
% \[
%   \Delta^{n,u}(\Delta_{++}(X))
%   \;=\;
%   \bigl\{\pi\in\Delta(\Delta_{++}(X)):\pi(p)\in\{0,\tfrac{1}{n}\}
%   \text{ for all }p\bigr\},
% \]
% the image of $\Delta^u(X)^n$ under $\psi$, consisting of the
% $n$-agent distributions that are uniform over their support.
% Theorem~\ref{thm:nonidentify} shows that, regardless of how rich $\Sigma$
% is, the mere absence of the grand menu $X$ from $\Sigma$ is sufficient to
% guarantee the existence of a non-identified distribution.  The example
% below illustrates the construction for $|X|=3$.

\begin{example}
Let $X=\{1,2,3\}$ and suppose $X\notin\Sigma$.  Define
\begin{align*}
   p^{\{1,2\}} &= \tfrac{1}{5}(2,2,1), &
  p^{\{1,3\}} &= \tfrac{1}{5}(2,1,2), &
  p^{\{2,3\}} &= \tfrac{1}{5}(1,2,2),\\[4pt]
  p^{\{1\}}   &= \tfrac{1}{4}(2,1,1), &
  p^{\{2\}}  &= \tfrac{1}{4}(1,2,1), &
  p^{\{3\}}  &= \tfrac{1}{4}(1,1,2).
\end{align*} 

Note that $p^E$ places equal extra weight on every element of $E$ relative to the uniform distribution. Define $\pi$ and $\pi'$ are two uniform distribution on $\{p^{\{1,2\}},p^{\{2,3\}},p^{\{1,3\}}\}$ and $\{p^{\{1\}},p^{\{2\}},p^{\{3\}}\}$, respectively. The two distributions have disjoint supports and are clearly distinct; each
are uniform distributions over three probability measures.%.{\color{red}CHRIS: THIS NOTATION WAS MOVED INTO A PROOF, WE SHOULD EITHER MOVE IT OUTSIDE OF THE PROOF OR NOT USE IT HERE.  WE NEED TO BE CAREFUL WITH CLAUDE BECAUSE IT MOVES NOTATION INTO PROOFS}

Take $E=\{1,2\}\in\Sigma$.  Computing: $p^{\{1,2\}}(\cdot|E)=p^{\{3\}}(\cdot|E)$, $p^{\{1,3\}}(\cdot|E)=p^{\{1\}}(\cdot|E)$, and $p^{\{2,3\}}(\cdot|E)=p^{\{2\}}(\cdot|E)$. Both distributions assign uniform weights, so the induced distributions coincide.  By the symmetry of the construction, the same
holds for $E=\{1,3\}$ and $E=\{2,3\}$, and hence for every
$E\in\Sigma$ with $E\subsetneq X$. It is easy to see (by definition) that identification will be restored by introducing observation the full menu. Under $E=X$, $p(1|X)=p_1$ takes values
$\tfrac{2}{5},\tfrac{2}{5},\tfrac{1}{5}$ under $\pi$ and
$\tfrac{1}{2},\tfrac{1}{4},\tfrac{1}{4}$ under $\pi'$.  Hence adding $X$ to $\Sigma$ identifies $\pi$.
\end{example}

\begin{figure}[h!]
\centering
\begin{tikzpicture}[scale=1.6, >=stealth]

%% ---------------------------------------------------------------
%%  Coordinates.  Barycentric -> Cartesian (side = 4):
%%    X = 2(p_3 - p_2),   Y = 2sqrt(3)*p_1
%%  Vertex 1 at top, 2 at bottom-left, 3 at bottom-right.
%% ---------------------------------------------------------------
\coordinate (Vx) at ( 0,      3.464);
\coordinate (Vy) at (-2,      0);
\coordinate (Vz) at ( 2,      0);

%% pi atoms  (p_E = (1_X + 1_E)/(3+|E|), even subsets minus empty)
\coordinate (P1)  at (-0.400,  1.386);  % p^1 = (2,2,1)/5
\coordinate (P2)  at ( 0.400,  1.386);  % p^2 = (2,1,2)/5
\coordinate (P3)  at ( 0.000,  0.693);  % p^3 = (1,2,2)/5

%% pi' atoms  (odd singletons)
\coordinate (P1p) at ( 0.000,  1.732);  % q^1 = (2,1,1)/4
\coordinate (P2p) at (-0.500,  0.866);  % q^2 = (1,2,1)/4
\coordinate (P3p) at ( 0.500,  0.866);  % q^3 = (1,1,2)/4

%% Iso-conditional endpoints on edges
%% {1,2}-rays from Vz, on edge V1-V2  (p_3 = 0)
\coordinate (E12_A) at (-1.000, 1.732);  % p(1|{1,2}) = 1/2
\coordinate (E12_B) at (-0.667, 2.309);  % p(1|{1,2}) = 2/3
\coordinate (E12_C) at (-1.333, 1.155);  % p(1|{1,2}) = 1/3
%% {1,3}-rays from Vy, on edge V1-V3  (p_2 = 0)
\coordinate (E13_A) at ( 0.667, 2.309);  % p(1|{1,3}) = 2/3
\coordinate (E13_B) at ( 1.000, 1.732);  % p(1|{1,3}) = 1/2
\coordinate (E13_C) at ( 1.333, 1.155);  % p(1|{1,3}) = 1/3
%% {2,3}-rays from Vx, on edge V2-V3  (p_1 = 0)
\coordinate (E23_A) at (-0.667, 0);      % p(2|{2,3}) = 2/3
\coordinate (E23_B) at ( 0.667, 0);      % p(2|{2,3}) = 1/3
\coordinate (E23_C) at ( 0.000, 0);      % p(2|{2,3}) = 1/2

%% ---------------------------------------------------------------
%%  Iso-conditional lines (background)
%% ---------------------------------------------------------------
\draw[orange, ultra thin, dashed]            (Vz) -- (E12_A);
\draw[orange, ultra thin, dashed]            (Vz) -- (E12_B);
\draw[orange, ultra thin, dashed]            (Vz) -- (E12_C);
\draw[cyan!55!black!55, very thin, dashed]   (Vy) -- (E13_A);
\draw[cyan!55!black!55, very thin, dashed]   (Vy) -- (E13_B);
\draw[cyan!55!black!55, very thin, dashed]   (Vy) -- (E13_C);
\draw[red, ultra thin, dashed]               (Vx) -- (E23_A);
\draw[red, ultra thin, dashed]               (Vx) -- (E23_B);
\draw[red, ultra thin, dashed]               (Vx) -- (E23_C);

%% ---------------------------------------------------------------
%%  Pappus hexagon: q1-p1-q2-p3-q3-p2
%%  Sides pass through V2,V1,V3,V2,V1,V3 in order
%% ---------------------------------------------------------------
\draw[gray, very thick, opacity=0.8]
    (P1p) -- (P1) -- (P2p) -- (P3) -- (P3p) -- (P2) -- cycle;

%% ---------------------------------------------------------------
%%  Simplex boundary
%% ---------------------------------------------------------------
\draw[very thick] (Vx) -- (Vy) -- (Vz) -- cycle;

%% ---------------------------------------------------------------
%%  Vertex labels
%% ---------------------------------------------------------------
\node[above, font=\small\bfseries] at (Vx) {$1$};
\node[left,  font=\small\bfseries] at (Vy) {$2$};
\node[right, font=\small\bfseries] at (Vz) {$3$};

%% ---------------------------------------------------------------
%%  Event-edge labels
%% ---------------------------------------------------------------
\node[font=\scriptsize, cyan!80,   rotate=-60] at ( 1.55,  2.15)
    {$E=\{1,3\}$};
\node[font=\scriptsize, red!80]                at ( 0,    -0.35)
    {$E'=\{2,3\}$};
\node[font=\scriptsize, orange!80, rotate= 60] at (-1.55,  2.15)
    {$E''=\{1,2\}$};

%% ---------------------------------------------------------------
%%  Point labels
%% ---------------------------------------------------------------
\node[left=2pt,  font=\scriptsize, blue!80]       at (P1)  {$p^{\{1,2\}}$};
\node[right=2pt, font=\scriptsize, blue!80]       at (P2)  {$p^{\{1,3\}}$};
\node[below=2pt, font=\scriptsize, blue!80]       at (P3)  {$p^{\{2,3\}}$};
\node[above=2pt, font=\scriptsize, green!50!black] at (P1p) {$p^{\{1\}}$};
\node[left=2pt,  font=\scriptsize, green!50!black] at (P2p) {$p^{\{2\}}$};
\node[right=2pt, font=\scriptsize, green!50!black] at (P3p) {$p^{\{3\}}$};

%% ---------------------------------------------------------------
%%  pi atoms: blue discs
%% ---------------------------------------------------------------
\filldraw[fill=blue!25, draw=blue!80, thick] (P1) circle (1pt);
\filldraw[fill=blue!25, draw=blue!80, thick] (P2) circle (1pt);
\filldraw[fill=blue!25, draw=blue!80, thick] (P3) circle (1pt);

%% ---------------------------------------------------------------
%%  pi' atoms: green discs
%% ---------------------------------------------------------------
\filldraw[fill=green!50!black, draw=green!50!black, thick] (P1p) circle (1pt);
\filldraw[fill=green!50!black, draw=green!50!black, thick] (P2p) circle (1pt);
\filldraw[fill=green!50!black, draw=green!50!black, thick] (P3p) circle (1pt);

\end{tikzpicture}
\caption{\footnotesize{Symmetric non-identification in $\Delta(\{1,2,3\})$, following the construction of Theorem~\ref{thm:nonidentify}.  Blue discs are the atoms of
$\pi$ (pair-labeled distributions), and green discs are the atoms of
$\pi'$ (singleton-labeled distributions). The two distributions are indistinguishable on all
binary menus but are separated by the full menu.}}
\label{fig:thm-nonidentification}
\end{figure}

% \begin{theorem}\label{thm:nonidentify}Suppose that $X\notin \Sigma$.  
% \begin{itemize}
% \item If $|X|$ is even, there is $\pi\in\Delta^{2^{n-1},u}(\Delta_{++}(X))$ that is not identified.
% \item If $|X|$ is odd, there is $\pi\in\Delta^{2^{n-1}-1,u}(\Delta_{++}(X))$ that is not identified.
% \end{itemize}
% \end{theorem}

The next result states that a similar construction is possible for all $|X|$.

\begin{theorem}\label{thm:nonidentify}Suppose that $X\notin \Sigma$.  
There is a uniform population of Bayesian agents that is not identified by $\Sigma$. \end{theorem}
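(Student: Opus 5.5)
The plan is to generalize the symmetric construction of the preceding example by labelling beliefs with subsets of $X$ and splitting those subsets by parity. For $S\subseteq X$ define
\[
p^S(x)=\frac{1+\mathbf{1}_S(x)}{|X|+|S|},\qquad x\in X .
\]
Each $p^S$ lies in $\Delta_{++}(X)$. Let $\pi$ be the uniform distribution over the family $\{p^S: |S| \text{ even}\}$ and $\pi'$ the uniform distribution over $\{p^S:|S|\text{ odd}\}$. Each family is indexed by $2^{|X|-1}$ subsets, so both are uniform $2^{|X|-1}$-agent populations. When $|X|=3$ we also remove the common atom described below; this recovers the example exactly, with the pairs against the singletons.

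\textbf{Step 1 (distinct subsets give distinct beliefs, with one exception).} I will check that $p^S=p^T$ if and only if $S=T$ or $\{S,T\}=\{\varnothing,X\}$.
\begin{itemize}
\item If $S\neq\varnothing,X$, then $S$ is recovered as the set of states where $p^S$ attains its maximum.
\item The sets $\varnothing$ and $X$ both give the uniform belief.
\end{itemize}
Consequently $\pi$ and $\pi'$ have $2^{|X|-1}$ distinct atoms each, with one caveat. If $|X|$ is odd, $\varnothing$ is even and $X$ is odd, so the uniform belief appears once in each family. Subtracting this common atom from both yields uniform $(2^{|X|-1}-1)$-agent populations.

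The two distributions differ: for any $x$, the singleton $\{x\}$ is odd and $\{x\}\neq X$ because $|X|\ge 3$. So $p^{\{x\}}$ is an atom of $\pi'$ but, by Step 1, is not an atom of $\pi$.

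\textbf{Step 2 (conditionals depend only on $S\cap E$).} For $x,y\in E$ the ratio $p^S(x)/p^S(y)$ equals $(1+\mathbf 1_S(x))/(1+\mathbf 1_S(y))$. This depends only on $S\cap E$, so $p^S(\cdot|E)$ is a function of $S\cap E$ alone.

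\textbf{Step 3 (a parity-swapping bijection that fixes $S\cap E$).} Fix $E\in\Sigma$. Since $X\notin\Sigma$, there is some $z\in X\setminus E$. The involution $S\mapsto S\triangle\{z\}$ has three properties:
\begin{itemize}
\item it is a bijection between even and odd subsets;
\item it preserves $S\cap E$, because $z\notin E$;
\item hence, by Step 2, it preserves $p^S(\cdot|E)$.
\end{itemize}
Therefore the uniform measure on even subsets and the uniform measure on odd subsets have the same pushforward under $S\mapsto p^S(\cdot|E)$. This is exactly the equality of $\pi$ and $\pi'$ on every set $\{p:p(\cdot|E)\in P\}$ required in the definition of identification.

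It remains to check that this pushforward argument survives the identification of $p^\varnothing$ with $p^X$ in the odd case. Each family contributes exactly one index producing the uniform belief, so both $\pi$ and $\pi'$ give that atom mass $2^{-(|X|-1)}$. Thus pushing forward through subset indices is legitimate, and removing the common atom subtracts equal mass from both sides.

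I expect this bookkeeping to be the only delicate point; everything else is routine. With it in place, $\pi\neq\pi'$ and the two agree on every $E\in\Sigma$, so $\pi$ is a uniform population that is not identified by $\Sigma$.
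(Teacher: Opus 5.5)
Your construction is the paper's, and your Step 3 involution $S\mapsto S\triangle\{z\}$ is a tidier substitute for the paper's count of even versus odd subsets with a given trace on $E$. Because it matches index multisets directly, it also absorbs the collision of the conditionals for traces $\varnothing$ and $E$, which the paper has to handle as a separate case.

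The gap is in Step 1 and in your concluding sentence when $|X|$ is even. In that case $\varnothing$ and $X$ both have even cardinality, and both produce the uniform belief. So the even family yields only $2^{|X|-1}-1$ distinct beliefs. The normalization your pushforward argument needs is $\pi=2^{-(|X|-1)}\sum_{|S|\text{ even}}\delta_{p^S}$, and this puts mass $2^{-(|X|-2)}$ on the uniform belief. Hence $\pi$ is not a uniform population, contrary to your claim that each family has $2^{|X|-1}$ distinct atoms. Your bookkeeping paragraph treats only odd $|X|$.

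You also cannot rescue $\pi$ by making it uniform over its distinct atoms. For $X=\{1,2,3,4\}$ and $E=\{1,2,3\}$, the uniform conditional on $E$ gets mass $\tfrac14$ under $\pi'$ (from $S=\{4\}$ and $S=E$), but it would get mass $\tfrac17$ under such a $\pi$.

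The repair is to change the witness, not the construction. For even $|X|$, no odd subset is $\varnothing$ or $X$. So by your Step 1 the odd family gives $2^{|X|-1}$ distinct beliefs, and $\pi'$ is a uniform $2^{|X|-1}$-agent population. Since $\pi\neq\pi'$ and the two agree on every $E\in\Sigma$, $\pi'$ is not identified. This is exactly the paper's case split: $\pi'$ is the uniform witness when $|X|$ is even, and the even-indexed distribution with the common uniform atom removed is the witness when $|X|$ is odd.
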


\begin{proof} Let  \[
  \Delta^{n,u}(\Delta_{++}(X))
  \;=\;
  \bigl\{\pi\in\Delta(\Delta_{++}(X)):\pi(p)\in\{0,\tfrac{1}{n}\}
  \text{ for all }p\bigr\},
\]
consist of the $n$-agent distributions that are uniform over their support. We prove this claim in two cases: (i) if $|X|$ is even, there is $\pi'\in\Delta^{2^{|X|-1},u}(\Delta_{++}(X))$ that is not identified, (ii) if $|X|$ is odd, there is $\pi\in\Delta^{2^{|X|-1}-1,u}(\Delta_{++}(X))$ that is not identified.

Let us consider the even case first.  For each $E\subseteq X$, we define $p_E=\frac{\mathbf{1}_X+\mathbf{1}_E}{|X|+|E|}$.  Now, set $\pi=\frac{1}{2^{|X|-1}}\sum_{E:|E|\mbox{ is even }}\delta_{p_E}$ and $\pi'=\frac{1}{2^{|X|-1}}\sum_{E:|E|\mbox{ is odd }} \delta_{p_E}$.  Except for the cases $\varnothing$ and $X$ (each of which have even cardinality), it is clear that if $E\neq F$, then $p_E\neq p_F$, so in fact $\pi$ and $\pi'$ have disjoint supports.

The cardinality of the family of even cardinality subsets and the cardinality of the family of odd subsets of a set with cardinality at least one are the same, a well-known consequence of the binomial formula:  $0=(1-1)^p=\sum_{k=0}^p\binom{p}{k}(-1)^k$, implying that $\sum_{\{k:k\mbox{ odd}\}}\binom{p}{k}=\sum_{\{k:k\mbox{ even}\}}\binom{p}{k}$.  In turn then each of these are $2^{p-1}$ (half the total number of subsets).

Now, fix any $L\subseteq X$ where $L\neq X$, so that $|X\setminus L|\geq 1$.  Then for any set $F\subseteq L$, $|\{E\subseteq X:|E|\mbox{ even and }E\cap L=F\}|=|\{E\subseteq X:|E|\mbox{ odd and }E\cap L = F\}|$, this follows as if $|F|$ is odd the first set corresponds to the cardinality of odd subsets of $X\setminus L$ and the second to the cardinality of even subsets of $X\setminus L$.  Similarly if $|F|$ is even the first set corresponds to the cardinality of even subsets of $X\setminus L$ and the second to the cardinality of odd subsets of $X\setminus L$.

Observe that for any $L\subseteq X$ and any $E\subseteq X$, $p_E(\cdot|L)=\frac{\mathbf{1}_L+\mathbf{1}_{E\cap L}}{|L|+|E\cap L|}$.  

First, consider the case in which $M\neq \varnothing$ and $M\neq L$.  Clearly if $M'\subseteq L$ is such that $\frac{\mathbf{1}_L+\mathbf{1}_M}{|L|+|M|}=\frac{\mathbf{1}_L+\mathbf{1}_{M'}}{|L|+|M'|}$, then $M=M'$.  So we can calculate $\pi(\{p_E:p_E(\cdot|L)=\frac{\mathbf{1}_L+\mathbf{1}_M}{|L|+|M|}\})$ as 
$\frac{1}{2^{|X|-1}}|\{E\subseteq X:E\mbox{ even },E\cap L=M\}|$, which we determined in the previous paragraph was equal to $\frac{1}{2^{|X|-1}}2^{|X|-|L|-1}$, or $\frac{1}{2^{|L|}}$, similarly for $\pi'$.  

Now, for the case in which $M\in\{\varnothing,L\}$, observe that $\frac{\mathbf{1}_L+\mathbf{1}_{\varnothing}}{|L|+|\varnothing|}=\frac{\mathbf{1}_L+\mathbf{1}_{L}}{|L|+|L|}$.  So to calculate $\pi(\{p_E:p_E(\cdot|L)=\frac{\mathbf{1}_L+\mathbf{1}_{\varnothing}}{|L|+|\varnothing|}\})$, we need to sum the number of $E\subseteq X$ for which $E\cap L=\varnothing$ to the number of $E\subseteq X$ for which $E\cap L=L$.  This results in, under either $\pi$ or $\pi'$, a probability of $\frac{1}{2^{|X|-1}}2^{|X|-|L|}$, or $\frac{1}{2^{|L|-1}}$.

%Then the number of even sets $E\subseteq X$ for which $E\cap L=M$ is, by the preceding, equal to the number of odd sets $E\subseteq X$ for which $E\cap L=M$, 

So, $\pi'$ is a distribution which is uniform (if $E$ and $F$ are odd and $p_E=p_F$, then $E=F$), though this is not the case for $\pi$.  And $\pi$ and $\pi'$ are distinct, having disjoint supports.  

The second case is similar, except that in this case, $p_{\varnothing}$ and $p_X$ correspond to the same element of $\Delta_{++}(X)$, and since one appears in the support of each of $\pi$ and $\pi'$, we may remove them with no overall effect on the induced probabilities of updates.

\end{proof}

%% MAIN THEOREM
\section{Identification on $\Delta^n(\Delta_{++}(X))$}
%%==========================================================================
In the last section, we establish the existence of non-identification. However, this does not tell us whether non-identification is widespread or exceptional. In this section, we address this question. We show that the answer is governed by the structure of $(X,\sim)$. %We record which pairs of states are jointly observed in the data. Our analysis shows that whether identification obtains generically depends on whether the graph is separable or nonseparable. 

We now borrow some definitions from \citet{bondy2008}.  A \emph{separation} of a graph is a decomposition into two nonempty connected subgraphs that share exactly one vertex in common.\footnote{A graph decomposition is a family of subgraphs whose edge sets partition the edges of the original graph, where a subgraph is any pair consisting of a subset $X'$ of vertices and an edge relation $\sim'$ satisfying: $x'\sim' y'$ implies $x' \sim y'$.}  The shared vertex $x^*$ is called a \emph{separating vertex}.  A connected graph that admits a separation is \emph{separable}; otherwise it is \emph{non-separable} (equivalently, \emph{2-connected}).

\begin{figure}[h!]
    \centering
\begin{tikzpicture}[
  vtx/.style    = {circle, draw=black, fill=white,
                   inner sep=0pt, minimum size=7pt, thick},
  sepvtx/.style = {circle, draw=black, fill=gray!40,
                   inner sep=0pt, minimum size=8pt, thick},
  thick
]

%%============================================================
%% LEFT: Non-separable (K_4, i.e.\ 2-connected)
%%============================================================

%% Vertices
\node[vtx] (v1) at (-2,  1) {};
\node[vtx] (v2) at (-2, -0.8) {};
\node[vtx] (v3) at (-3.8, -0.8) {};
\node[vtx] (v4) at (-3.8,  1) {};   % inner vertex

%% Edges: outer triangle + three spokes to inner vertex
\draw (v1) -- (v2) -- (v3) -- (v1);
\draw (v4) -- (v1);

\draw (v4) -- (v3);

%% Caption
\node[font=\small, thin] at (-2.9, -1.7)
    {\textit{Non-separable}};

%%============================================================
%% RIGHT: Separable (two triangles sharing the vertex x^*)
%%============================================================

%% Subgraph shading (draw fills before nodes so they stay under edges)
\fill[blue!10] (1.8,  1.0) -- (3.2, 0) -- (1.8, -1.0) -- cycle;
\fill[red!10]  (4.6,  1.0) -- (3.2, 0) -- (4.6, -1.0) -- cycle;

%% Vertices
\node[vtx]    (a)  at (1.8,  1.0) {};
\node[vtx]    (b)  at (1.8, -1.0) {};
\node[vtx]    (c)  at (4.6,  1.0) {};
\node[vtx]    (d)  at (4.6, -1.0) {};
\node[sepvtx] (xs) at (3.2,  0.0) {};   % separating vertex

%% Edges: left triangle + right triangle
\draw (a) -- (b) -- (xs) -- (a);
\draw (c) -- (d) -- (xs) -- (c);

%% Separating vertex label
\node[font=\scriptsize] at (3.25, 0.32) {$x^*$};

%% Subgraph labels
\node[font=\scriptsize, blue!60] at (2.3,  0.0) {$G_1$};
\node[font=\scriptsize, red!60]  at (4.1,  0.0) {$G_2$};

%% Caption
\node[font=\small, thin] at (3.2, -1.7) {\textit{Separable}};

%%============================================================
%% Vertical divider
%%============================================================
%\draw[gray!35, thin] (0, -1.4) -- (0, 1.7);

\end{tikzpicture}

\caption{\footnotesize{The left panel shows $K_4$ (complete graph on four vertices), which is 2-connected: removing any single vertex leaves the remaining graph connected, so no separating vertex exists. The right panel shows two triangles $G_1$ and $G_2$ joined at $x^*$ (shaded gray): removing $x^*$ disconnects $G_1$ from $G_2$, making $x^*$ a separating vertex and the graph separable.}}
\label{fig:separable}
\end{figure}

Remember $\Sigma=\{ \{1,3\},\{2,3\}\}$ induces $1\sim 3 \sim 2$, which is separable. On the other hand, $\Sigma'=\{ \{1,3\},\{2,3\}, \{1,2\} \}$ induces $1\sim 3 \sim 2 \sim 1$, which is non-separable. Example \ref{ex:non-identification} and the discussion after suggest that the identification is guaranteed as long as the induced graph is non-separable. The following example illustrates this point by constructing two distinct distributions over weights that generate identical behavior on all binary menus but differ when choices from larger menus are considered.

\begin{example}\label{ex:pappus}
    \label{Ex1} Let $\Sigma'=\{ \{1,3\},\{2,3\}, \{1,2\} \}$. The induced graph is non-separable.  Let $\pi$ and $\pi'$ be uniform distributions, with corresponding supports:
\begin{align*}
    \text{supp }\pi&= \{p^1 = \frac{1}{15}(2,3,10),\ p^2 = \frac{1}{13}(2,6,5),\ p^3 = \frac{1}{12}(4,3,5)\}, \\
\text{supp }\pi'&= \{q^1 = \frac{1}{18}(2,6,10),\ q^2 = \frac{1}{17}(4,3,10),\ q^3 = \frac{1}{15}(4,6,5)\}.
\end{align*}

Under $E=\{1,2\}$, $p^1(\cdot|E)=q^3(\cdot|E)$, $p^2(\cdot|E)=q^1(\cdot|E)$, and $p^3(\cdot|E)=q^2(\cdot|E)$. Since $\pi$ and $\pi'$ are uniform, they induce the same distribution under $E=\{1,2\}$. Similarly, under $E'=\{2,3\}$, $p^1(\cdot|E')=q^2(\cdot|E')$, $p^2(\cdot|E')=q^3(\cdot|E')$, and $p^3(\cdot|E')=q^1(\cdot|E')$. Finally, Hence, under $E''=\{1,3\}$, $p^1(\cdot|E'')=q^1(\cdot|E'')$, $p^2(\cdot|E'')=q^2(\cdot|E'')$, and $p^3(\cdot|E'')=q^3(\cdot|E'')$. Therefore, the sets of relative likelihoods agree for $\pi$ and $\pi'$ even though they are distinct distributions.\footnote{Note that $\pi$ and $\pi'$ are not unique; any convex combination $\hat{\pi} = \alpha \pi + (1 - \alpha) \pi'$ with $\alpha \in [0,1]$ will also satisfy the desired property.} Figure~\ref{fig:3-nonidentification} illustrates this example.  We refer to this example as a \emph{Pappus configuration}, as the hexagon described by the six beliefs relates to the classical Pappus' Theorem of projective geometry, see \emph{e.g.} \citet{coxeter1987}, Theorem 4.41 and the right hand side of Figure 4.4A.
\end{example}

\begin{figure}[h!]
    \centering
\begin{tikzpicture}[scale=1.6, >=stealth]

%% ---------------------------------------------------------------
%%  Coordinates.  Barycentric -> Cartesian (side = 4):
%%    X = 2(p_z - p_y),   Y = 2sqrt(3)*p_x
%%  Vertex x at top, y at bottom-left, z at bottom-right.
%% ---------------------------------------------------------------

\coordinate (Vx) at (0,      3.464);
\coordinate (Vy) at (-2,     0);
\coordinate (Vz) at (2,      0);

%% pi atoms
\coordinate (P1)  at ( 0.933,  0.462);  % (2,3,10)/15
\coordinate (P2)  at (-0.154,  0.533);  % (2,6,5)/13
\coordinate (P3)  at ( 0.333,  1.155);  % (4,3,5)/12

%% pi' atoms
\coordinate (P1p) at ( 0.444,  0.385);  % (2,6,10)/18
\coordinate (P2p) at ( 0.824,  0.815);  % (4,3,10)/17
\coordinate (P3p) at (-0.133,  0.924);  % (4,6,5)/15

%% Iso-conditional endpoints on edges
%% {x,y}-rays from Vz, on edge Vx-Vy  (p_z = 0)
\coordinate (Exy_A) at (-1.200, 1.386);  % p(x|{x,y}) = 2/5
\coordinate (Exy_B) at (-1.500, 0.866);  % p(x|{x,y}) = 1/4
\coordinate (Exy_C) at (-0.857, 1.979);  % p(x|{x,y}) = 4/7
%% {x,z}-rays from Vy, on edge Vx-Vz  (p_y = 0)
\coordinate (Exz_A) at ( 1.667, 0.577);  % p(x|{x,z}) = 1/6
\coordinate (Exz_B) at ( 1.429, 0.990);  % p(x|{x,z}) = 2/7
\coordinate (Exz_C) at ( 1.111, 1.540);  % p(x|{x,z}) = 4/9
%% {y,z}-rays from Vx, on edge Vy-Vz  (p_x = 0)
\coordinate (Eyz_A) at ( 1.077, 0);      % p(y|{y,z}) = 3/13
\coordinate (Eyz_B) at (-0.182, 0);      % p(y|{y,z}) = 6/11
\coordinate (Eyz_C) at ( 0.500, 0);      % p(y|{y,z}) = 3/8

%% ---------------------------------------------------------------
%%  Iso-conditional lines (background)
%% ---------------------------------------------------------------
\draw[orange, ultra thin, dashed]                  (Vz) -- (Exy_A);
\draw[orange, ultra thin, dashed]                  (Vz) -- (Exy_B);
\draw[orange, ultra thin, dashed]                  (Vz) -- (Exy_C);
\draw[cyan!55!black!55, very thin, dashed]           (Vy) -- (Exz_A);
\draw[cyan!55!black!55, very thin, dashed]           (Vy) -- (Exz_B);
\draw[cyan!55!black!55, very thin, dashed]           (Vy) -- (Exz_C);
\draw[red, ultra thin, dashed] (Vx) -- (Eyz_A);
\draw[red, ultra thin, dashed] (Vx) -- (Eyz_B);
\draw[red, ultra thin, dashed] (Vx) -- (Eyz_C);

%% ---------------------------------------------------------------
%% Event-edge labels
%% ---------------------------------------------------------------
\node[font=\scriptsize, cyan!80, rotate=-60] at (1.55, 2.15)
    {$E=\{1,3\}$};
\node[font=\scriptsize, red!80] at (0, -0.35)
    {$E'=\{2,3\}$};
    \node[font=\scriptsize, orange!80, rotate=60 ] at (-1.55, 2.15)
    {$E''=\{1,2\}$};

%% ---------------------------------------------------------------
%% Conditional-value labels at ray endpoints on edges
%% ---------------------------------------------------------------
% \filldraw[blue!55] (EA1) circle (1.5pt);
% \node[right=2pt, font=\tiny, blue!65] at (EA1)
%     {$\bigl(\tfrac{3}{4},\tfrac{1}{4}\bigr)$};

% \filldraw[blue!55] (EA2) circle (1.5pt);
% \node[right=2pt, font=\tiny, blue!65] at (EA2)
%     {$\bigl(\tfrac{1}{3},\tfrac{2}{3}\bigr)$};

% \filldraw[red!55] (EB1) circle (1.5pt);
% \node[below=2pt, font=\tiny, red!65] at (EB1)
%     {$\bigl(\tfrac{2}{3},\tfrac{1}{3}\bigr)$};

% \filldraw[red!55] (EB2) circle (1.5pt);
% \node[below=2pt, font=\tiny, red!65] at (EB2)
%     {$\bigl(\tfrac{1}{3},\tfrac{2}{3}\bigr)$};

%% ---------------------------------------------------------------
%% Point labels
%% ---------------------------------------------------------------

\node[right=2pt,  font=\scriptsize, blue!80] at (P1)   {$p^1$};
\node[left=2pt,  font=\scriptsize, blue!80] at (P2)   {$p^2$};
\node[above=2pt,  font=\scriptsize, blue!80] at (P3)   {$p^3$};
\node[below=2pt,  font=\scriptsize, green!50!black] at (P1p)   {$q^1$};
\node[right=2pt,  font=\scriptsize, green!50!black] at (P2p) {$q^2$};
\node[left=2pt,   font=\scriptsize, green!50!black] at (P3p) {$q^3$};

%% ---------------------------------------------------------------
%%  Pappus hexagon (sides alternate through Vx, Vy, Vz)
%%  p2 -- p3latex%% ---------------------------------------------------------------
%%  Pappus hexagon
%% ---------------------------------------------------------------
\draw[gray,  very thick, opacity=0.8]
    (P2) -- (P3p) -- (P3) -- (P2p) -- (P1) -- (P1p) -- cycle;

%% ---------------------------------------------------------------
%%  Simplex boundary
%% ---------------------------------------------------------------
\draw[very thick] (Vx) -- (Vy) -- (Vz) -- cycle;

%% ---------------------------------------------------------------
%%  Vertex labels
%% ---------------------------------------------------------------
\node[above, font=\small\bfseries] at (Vx) {$1$};
\node[left,  font=\small\bfseries] at (Vy) {$2$};
\node[right, font=\small\bfseries] at (Vz) {$3$};

%% ---------------------------------------------------------------
%%  pi atoms: orange upward triangles
%% ---------------------------------------------------------------

%% ---------------------------------------------------------------
%% Atoms of π  (blue filled discs; larger = more weight)
%% ---------------------------------------------------------------
\filldraw[fill=blue!25, draw=blue!80, thick] (P1) circle (1pt); % w=1/3
\filldraw[fill=blue!25, draw=blue!80, thick] (P2) circle (1pt); % w=1/3
\filldraw[fill=blue!25, draw=blue!80, thick] (P3) circle (1pt); % w=1/3

\draw[fill=green!50!black, green!50!black, very thick] (P1p) circle (1pt);            % q^1, w=1/3
\filldraw[fill=green!50!black, draw=green!50!black, thick]
    (P2p) circle (1.0pt);                                         % q^2, w=1/3
\filldraw[fill=green!50!black, draw=green!50!black, thick]
    (P3p) circle (1.0pt);

\end{tikzpicture}
\caption{\footnotesize{Pappus configuration in $\Delta(\{x,y,z\})$.  Blue discs are the atoms of $\pi$ (area proportional to weight), and green discs are the atoms of $\pi'$.  The three families of iso-conditional
rays encode the binary-menu observables.  In each family,
every ray contains exactly one atom of $\pi$ and one of $\pi'$, so the
two distributions are indistinguishable on all binary menus.  The gray
hexagon $p^2,q^3,p^3,q^2,p^1,q^1$ is the Pappus hexagon.
%its six sides alternate through vertices $1$, $2$, $3$, and each pair of opposite sides concurs at the same vertex. 
}}
\label{fig:3-nonidentification}
\end{figure}

The example illustrates that non-separability of the induced graph does not guarantee identification for \emph{every} $\pi\in\Delta^n(\Delta_{++}(X))$. The Pappus configuration provides a distribution that evades identification even when $\Sigma'$ contains all binary events and the induced graph is non-separable ($K_3$, which is 2-connected).  In the next section, our main result illustrates that identified distributions form a \emph{generic} subset, not that every distribution is identified.  The Pappus example is therefore a knife-edge exception.  This motivates the notion of genericity.  We say a subset of a topological space is \emph{generic} if it contains an open and dense subset; a property that fails only on a non-generic set is thus exceptional in a strong topological sense. Intuitively, a generic set is a large subset of the grand set from the topological point of view: it contains an open region around each of its own points, and it comes arbitrarily close to every point of the grand set.

  The answer turns entirely on a single structural property of the
family of events: whether the induced $(X,\sim_{\Sigma})$ forms a separable graph or not.  Our main result shows that these two cases yield opposite conclusions.  When $(X,\sim)$ is non-separable, identification is generic: the set of $n$-agent distributions
that are identified by $\Sigma$ is large in a topological sense, so
non-identification is the exceptional case.  When $(X,\sim)$ is separable,
the conclusion reverses and non-identification becomes generic, consistent
with the intuition from Example~2 that richer separation of states allows
more room for distinct distributions to produce identical observables.

\begin{theorem}\label{thm:separable}Let $n\geq 2$.
\begin{enumerate}
\item If $(X,\sim)$ is non-separable, then $\{\pi\in\Delta^n(\Delta_{++}(X)):\pi\text{ is identified by }\Sigma\}$ is generic.
\item If $(X,\sim)$ is separable, then $\{\pi\in\Delta^n(\Delta_{++}(X)):\pi\text{ is not identified by }\Sigma\}$ is generic.
\end{enumerate}
\end{theorem}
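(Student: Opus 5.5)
The plan is to parametrize $\Delta^n(\Delta_{++}(X))$ by $n$-tuples of beliefs via the map $\psi:\Delta_{++}(X)^n\to\Delta^n(\Delta_{++}(X))$, $\psi(p^1,\ldots,p^n)=\frac1n\sum_i\delta_{p^i}$, and to work in log-coordinates. A belief $p$ is determined by the vector $\ell=\log p\in\mathbb{R}^X$ modulo constants. For an event $E$, the update $p(\cdot|E)$ is the same as the differences $\ell_x-\ell_y$ for $x,y\in E$; equivalently, it is the family of edge-differences along the edges of $(X,\sim)$ inside $E$.

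For both parts, the open dense sets I will use consist of distributions with $n$ \emph{distinct} atoms, defined by finitely many strict inequalities or non-equalities among these log-differences. Near such a $\pi$, every $n$-agent distribution in a weak neighborhood again has $n$ distinct atoms, each close to one $p^i$. This gives openness. Density follows by perturbing the tuple of any $\pi=\psi(p^1,\ldots,p^n)$, with repeated atoms split.

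\textbf{Non-separable case.} Let $G$ be the set of tuples satisfying two conditions. First, for every $E\in\Sigma$ and $i\neq j$, $p^i(\cdot|E)\neq p^j(\cdot|E)$. Second, for every simple cycle $C$ of $(X,\sim)$ and every labelling $c:C\to\{1,\ldots,n\}$ of its edges that is not constant, the orientation-sum satisfies
\[\sum_{(x,y)\in C}\bigl(\log p^{c(xy)}_x-\log p^{c(xy)}_y\bigr)\neq 0.\]
Each such sum is a nonzero linear functional of the log-coordinates. Indeed, the edges carrying a given label form a proper nonempty subset of the cycle, hence a union of paths with distinct endpoints, so the functional is a nonzero sum of endpoint differences. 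Therefore $G$ is the complement of finitely many hyperplanes and is open and dense.

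Now fix $\pi=\psi(p)$ with $p\in G$, and suppose $\pi'\in\Delta^S(\Delta_{++}(X))$ generates the same data. By the first condition, for each $E$ the observed distribution has $n$ distinct atoms of mass $1/n$. Hence every atom $q$ of $\pi'$ satisfies $q(\cdot|E)=p^{i_E}(\cdot|E)$ for some index $i_E$. Label each edge $xy$ by $i_E$ for an event $E\ni x,y$; the first condition makes this well defined. Because $\log q$ exists, the label-weighted differences sum to zero around every cycle. The second condition then forces every cycle to be monochromatic.

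Next I will invoke the standard fact from \citet{bondy2008}: in a non-separable graph on at least $3$ vertices, any two edges lie on a common cycle. This forces a single label throughout, so $q=p^i$ for some $i$. Thus $\operatorname{supp}\pi'\subseteq\{p^1,\ldots,p^n\}$, and matching the $1/n$ weights on any one event gives $\pi'=\pi$.

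\textbf{Separable case.} Take a separation with separating vertex $x^*$ and vertex sets $A,B$, where $A\cap B=\{x^*\}$ and $|A|,|B|\ge2$. Every $E\in\Sigma$ induces a clique, and a clique cannot straddle a separating vertex, so each $E\subseteq A$ or $E\subseteq B$. Given $p^1,p^2$, define the ``spliced'' belief $q^{12}$ by
\[q^{12}_x\propto p^1_x/p^1_{x^*}\ \text{ for } x\in A, \qquad q^{12}_x\propto p^2_x/p^2_{x^*}\ \text{ for } x\in B,\]
and define $q^{21}$ symmetrically. Set $\pi'=\psi(q^{12},q^{21},p^3,\ldots,p^n)$, which uses $n\ge2$. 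On events inside $A$, the atoms $q^{12},q^{21}$ reproduce the updates of $p^1,p^2$; on events inside $B$ they reproduce those of $p^2,p^1$. Hence $\pi'$ has the same data as $\pi$.

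Moreover, $\pi'\neq\pi$ whenever $p^1$ and $p^2$ induce different ratios on $A$ \emph{and} different ratios on $B$, and the atoms are distinct. These are non-equality conditions, so the set where they hold is open and dense. This shows that non-identification is generic.

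The main obstacle I expect is the topological bookkeeping: transferring openness and density from tuples to $\Delta^n(\Delta_{++}(X))$ under the weak topology. This requires care at distributions with coinciding atoms, which is why all the good sets above are taken inside the distinct-atom region. A secondary point is justifying that the cycle-closure constraint on $\log q$ uses only edges carrying a well-defined label, and the first genericity condition ensures exactly this.
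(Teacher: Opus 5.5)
Your proposal is correct and follows essentially the paper's proof: the same parametrization by $\psi$, the same splice across the separating vertex in the separable case, and the same generic non-vanishing of non-constant labeled cycle products combined with Whitney's two-edges-on-a-cycle theorem in the non-separable case. The differences are minor and in your favour: a transposition of two atoms instead of the paper's cyclic shift needs only a weaker genericity condition, and your log-linear argument that each cycle functional is nonzero, together with your direct treatment of arbitrary simple alternatives $\pi'$, fills in steps the paper only asserts.

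Two small repairs are needed, neither of which is a real gap. First, condition (1) should range only over events with $|E|\geq 2$, since a singleton in $\Sigma$ would make your $G$ empty. Second, condition (1) does not make edge labels well defined: two events through $x,y$, or a single $E$ with $|E|\geq 3$, can yield different indices with equal $x$--$y$ ratios. This is harmless, because your cycle argument works for any choice of label per edge.
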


Two remarks about Theorem~\ref{thm:separable} are in order:

\begin{itemize}
\item Theorem~\ref{thm:separable} illustrates  that the number of sets on which one has information about marginals is less important than the underlying graph-theoretic structure in determining identification.  Thus, even when all sets have the same cardinality (binary sets for example), naively counting them as ``data'' may lead to misleading results.  For example, the separable graph in Figure~\ref{fig:separable} arises from six binary sets on five states, and is generically non-identified.  On the other hand, a simple cycle through the five states arises from five binary sets, and is generically identified.
\item A refinement of the theorem is available:  the generic number of rationalizations---namely the cardinality of the set of $\pi'$ which has the same induced Bayesian updates as a given $\pi$---for a given environment with $n\geq 1$ is $(n!)^{m-1}$, where $m$ refers to the number of ``nonseparable components,'' or maximal nonseparable subgraphs (here, any connected pair of states is by definition a nonseparable graph).  Thus, $(n!)^{m-1}$ can be understood as a ``degree of non-identification.''  See \emph{e.g.} \citet{harary1959} for a method of computing $m$.  This allows us to infer that graphs forming tree structures are generically the worst from an identification perspective (trees have $m=|X|-1$).  This is to be expected as trees are the minimal graph-theoretic structures allowing identification in the single-agent case.
\end{itemize}

The proof rests on two lemmas.  The first is a folk result characterizing probabilistic consistency via cycle products, utilizing an axiom found already in \citet{luce1959individual}, who established one direction of the result.  In log form, it is a result describing the existence of potential functions.  We utilize an argument similar to the one found in \citet{afriat1963} or \citet{rockafellar1966}.  The second provides a topological tool for transferring density and openness from $\Delta_{++}(X)^n$ to $\Delta^n(\Delta_{++}(X))$.

%%==========================================================================
%% LEMMA 1: PROBABILISTIC CONSISTENCY
%%==========================================================================

A \emph{setwise cycle} for the environment $(X,\Sigma)$ is a list
\[
  \mathcal{C} = \bigl((x_1,x_2),E_1;\; (x_2,x_3),E_2;\; \ldots;\; (x_m,x_1),E_m\bigr),
\]
where the vertices $x_1,\ldots,x_m\in X$ are distinct, each $E_i\in\Sigma$, and $\{x_i,x_{i+1}\}\subseteq E_i$ for each $i$ (indices mod $m$).

For a collection $\{p_E\}_{E\in\Sigma}$ with each $p_E\in\Delta_{++}(E)$, and a setwise cycle $\mathcal{C}$ as above, define the \emph{product along $\mathcal{C}$} as:
\[
  \Pi(\mathcal{C},\{p_E\}) \;=\; \prod_{i=1}^{m} \frac{p_{E_i}(x_{i+1})}{p_{E_i}(x_i)},
\]
where $x_{m+1}:=x_1$.

We say $\{p_E\}_{E\in\Sigma}$ is \emph{probabilistically consistent} if there exists $p\in\Delta_{++}(X)$ such that $p_E=p(\cdot|E)$ for all $E\in\Sigma$.

\begin{lemma}\label{lem:consistent}
The collection $\{p_E\}_{E\in\Sigma}$ is probabilistically consistent if and only if $\Pi(\mathcal{C},\{p_E\})=1$ for every setwise cycle $\mathcal{C}$.
\end{lemma}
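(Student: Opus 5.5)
Necessity is a direct computation. Suppose $p\in\Delta_{++}(X)$ satisfies $p_E=p(\cdot|E)$ for all $E\in\Sigma$. Then for each link of a setwise cycle,
\[
\frac{p_{E_i}(x_{i+1})}{p_{E_i}(x_i)}=\frac{p(x_{i+1})/p(E_i)}{p(x_i)/p(E_i)}=\frac{p(x_{i+1})}{p(x_i)}.
\]
The product $\Pi(\mathcal{C},\{p_E\})$ is therefore a telescoping product $\prod_{i=1}^m p(x_{i+1})/p(x_i)=p(x_1)/p(x_1)=1$.

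For sufficiency, I would work in logs and build a potential. For each ordered pair $(x,y)$ with $\{x,y\}\subseteq E\in\Sigma$, set $w_E(x,y)=\log p_E(y)-\log p_E(x)$, which is well defined because each $p_E$ has full support. Cycle consistency in multiplicative form says that the sum of $w$ along any setwise cycle is zero.

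The first step is to extend this to closed walks, where vertices may repeat. Any closed walk decomposes into simple setwise cycles, so its sum is also zero. The degenerate pieces need separate treatment: a back-and-forth step $x\to y\to x$ contributes $w_E(x,y)+w_{E'}(y,x)$, possibly with $E\neq E'$. Two-element cycles $((x,y),E;(y,x),E')$ count as setwise cycles, since the vertices $x,y$ are distinct, and they force $w_E(x,y)=w_{E'}(x,y)$. So the weight of an edge does not depend on which event is used to witness it. With that settled, the standard argument of \citet{afriat1963} and \citet{rockafellar1966} applies: any closed walk splits at its first repeated vertex into a simple cycle and a shorter closed walk, and induction on length gives a zero sum.

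Next, use the standing assumption that $(X,\sim)$ is connected. Fix a root $x_0$ and define $\phi(x)$ as the sum of $w$ along any walk from $x_0$ to $x$. Path independence follows from the zero sum on closed walks. Put $p(x)=e^{\phi(x)}/\sum_{z}e^{\phi(z)}\in\Delta_{++}(X)$. Then for $x,y\in E\in\Sigma$,
\[
\frac{p(y)}{p(x)}=e^{w_E(x,y)}=\frac{p_E(y)}{p_E(x)}.
\]
So $p(\cdot|E)$ and $p_E$ are probability vectors on $E$ with identical ratios, hence equal. The case $|E|=1$ is trivial.

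The main obstacle is the bookkeeping in the reduction from closed walks to setwise cycles. This includes the length-two cycles that force event-independence of edge weights, and the step that splits a walk at a repeated vertex. It needs care because the definition requires distinct vertices, and a single edge may be witnessed by several events.
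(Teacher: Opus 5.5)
Your proposal is correct and follows the paper's proof essentially step for step. You prove necessity by telescoping. For sufficiency you extend unit products from setwise cycles to closed walks (the paper's ``generalized setwise cycles'') by excising a minimal repeated segment, define a potential along walks from a fixed root (additively in logs, where the paper works multiplicatively), and verify the ratios on each event. Your separate discussion of two-element cycles is harmless but not needed, since those are already setwise cycles covered by the same excision step; and where you rely on the standing connectedness assumption, the paper also remarks that a disconnected graph can be handled component by component.
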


\begin{proof}
Probabilistic consistency is equivalent to the existence of $u\in\mathbb{R}^X_{++}$ such that for all $x,y\in X$ and all $E\in\Sigma$ with $\{x,y\}\subseteq E$,
\begin{equation}\label{eq:ratio}
  \frac{u(x)}{u(y)} = \frac{p_E(x)}{p_E(y)}.
\end{equation}
%(Given such $u$, set $p(x)=u(x)/\sum_{z\in X}u(z)$; then $p\in\Delta_{++}(X)$ and $p(x|E) = u(x)/\sum_{y\in E}u(y) = p_E(x)$, since $p_E$ is itself a probability on $E$ and all ratios $p_E(x)/p_E(y)$ are determined by $u$.)

\medskip
\noindent\textit{Necessity (consistency $\Rightarrow$ unit products).}  This follows immediately owing to the fact that $\frac{p_{E}(x)}{p_E(y)}=\frac{p(x)}{p(y)}$ for a probabilistically consistent collection.%If $u$ satisfying \eqref{eq:ratio} exists, then along any setwise cycle $((x_1,x_2),E_1;\ldots;(x_m,x_1),E_m)$,
%\[
%  \prod_{i=1}^m \frac{p_{E_i}(x_{i+1})}{p_{E_i}(x_i)}
%  = \prod_{i=1}^m \frac{u(x_{i+1})}{u(x_i)}
%  = 1,
%\]
%where the last equality holds by telescoping (the product is cyclic).

\medskip
\noindent\textit{Sufficiency (unit products $\Rightarrow$ consistency).}  We work with connected graphs; for disconnected graphs apply the same argument to each connected component and normalize each component independently.

\emph{Extension to generalized cycles.}  Call a list $((x_1,x_2),E_1;\ldots;(x_m,x_1),E_m)$ a \emph{generalized setwise cycle} if it satisfies all conditions of a setwise cycle except that the $x_i$'s need not be distinct.  Products along generalized cycles are defined in the same way.  We claim that products along setwise cycles equal $1$ if and only if all products along generalized setwise cycles equal $1$:  Take any generalized setwise cycle where elements repeat and take the product; inductively replace any minimal segment where one element repeats by $1$.

\emph{Construction of $u$.}  Fix any $x^*\in X$ and set $u(x^*)=1$.  For any $x\in X\setminus\{x^*\}$, fix a path $(x^*=x_1,E_1; x_2,E_2;\ldots; x_{k-1},E_{k-1}; x_k=x)$, where $\{x_i,x_{i+1}\}\subseteq E_i\in\Sigma$ for each $i$, and define
\[
  u(x) = \prod_{i=1}^{k-1} \frac{p_{E_i}(x_{i+1})}{p_{E_i}(x_i)} > 0.
\]
Such a path exists by connectedness of $(X,\sim)$.  To see $u(x)$ is independent of the choice of path: given two paths $\gamma_1$ and $\gamma_2$ from $x^*$ to $x$, concatenating $\gamma_1$ with the reversal of $\gamma_2$ yields a generalized setwise cycle.  Since all products along generalized setwise cycles equal $1$, both paths assign the same value to $u(x)$.

\emph{Verification.}  Fix any $E\in\Sigma$ and $x,y\in E$.  Let $\gamma = (x^*=x_1,E_1;\ldots;x_{k-1},E_{k-1};x_k=x)$ be the path used to define $u(x)$.  Then
\[
  (x^*=x_1,E_1;\ldots; x_{k-1},E_{k-1}; x_k=x, E; y)
\]
is a valid path from $x^*$ to $y$, giving $u(y) = u(x)\frac{p_E(y)}{p_E(x)}$, i.e., $\frac{u(x)}{u(y)}=\frac{p_E(x)}{p_E(y)}$, which is \eqref{eq:ratio}. \hfill$\square $\end{proof}

%%==========================================================================
%% LEMMA 2: HOMEOMORPHISM
%%==========================================================================

Next, let us define $\psi:\Delta(X)^n\rightarrow\Delta^n(\Delta(X))$ by $\pi(p^1,\ldots,p^n)=\frac{1}{n}\sum_{i=1}^n \delta_{p_i}$.  Define an equivalence relation $\equiv$ via $(p_1,\ldots,p_n)\equiv(q_1,\ldots,q_n)$ if there exists a permutation $\sigma$ of $\{1,\ldots,n\}$ with $p_i=q_{\sigma(i)}$ for all $i$.  As is usual, for $(p_1,\ldots,p_n)\in\Delta(X)^n$, define $[(p_1,\ldots,p_n)]\in\Delta(X)^n/\equiv$ to be its equivalence class, and define $Q:\Delta(X)^n\rightarrow\Delta(X)^n/\equiv$ to be the quotient map.  Since $\psi$ is constant on equivalence classes, we can write $\psi = \psi^* \circ Q$ for a well-defined map $\psi^*([p^1,\ldots,p^n])=\sum_i \delta_{p^i}$.

\begin{lemma}\label{lem:homeo}
$\Delta^n(\Delta(X))$ with the relative weak topology is homeomorphic to $\Delta(X)^n/{\equiv}$ under $\psi^*$.
\end{lemma}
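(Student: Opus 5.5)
We will show that $\psi^*$ is a continuous bijection from a compact space to a Hausdorff space. Since such a map is automatically a homeomorphism, this proves Lemma~\ref{lem:homeo}. (We read $\psi^*$ with the normalization $\frac1n\sum_i\delta_{p^i}$, matching $\psi$.)

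\emph{Well-definedness and bijectivity.} $\psi$ is constant on $\equiv$-classes, so $\psi^*$ is well defined. For surjectivity, take $\pi\in\Delta^n(\Delta(X))$. Its atoms $p$ have masses $k_p/n$ with $k_p\in\mathbb{N}$ and $\sum_p k_p=n$. Listing each atom $p$ exactly $k_p$ times gives an $n$-tuple that $\psi$ maps to $\pi$. For injectivity, suppose $\frac1n\sum_i\delta_{p^i}=\frac1n\sum_i\delta_{q^i}$. Then each point appears the same number of times in both tuples, namely $n\pi(\{p\})$. So the tuples differ only by a permutation and define the same class.

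\emph{Continuity.} By the universal property of the quotient topology, it suffices to show that $\psi=\psi^*\circ Q$ is continuous into the weak topology. Fix a bounded continuous $f:\Delta(X)\to\mathbb{R}$. The map $(p^1,\dots,p^n)\mapsto\int f\,d\psi(p^1,\ldots,p^n)=\frac1n\sum_i f(p^i)$ is continuous on $\Delta(X)^n$. Since the weak topology is the initial topology generated by these functionals, $\psi$ is continuous.

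\emph{Compactness and Hausdorffness.} $\Delta(X)^n$ is compact, being a finite product of simplices. Hence $\Delta(X)^n/{\equiv}$ is compact as the continuous image under $Q$. The weak topology on $\Delta(\Delta(X))$ is Hausdorff: if $\pi\neq\pi'$ are Borel probabilities on the compact metric space $\Delta(X)$, some bounded continuous $f$ has $\int f\,d\pi\neq\int f\,d\pi'$. For finitely supported measures this is immediate: take $f$ to be a continuous bump equal to $1$ at a point where the masses differ and vanishing near the other finitely many support points. The subspace $\Delta^n(\Delta(X))$ with the relative topology is then Hausdorff as well. Therefore $\psi^*$ is a continuous bijection from a compact space onto a Hausdorff space, hence closed, hence a homeomorphism.

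The only delicate point is that the relative weak topology need not coincide with a more concrete metric. The compact-to-Hausdorff argument avoids having to identify it. If instead one wanted a direct argument for continuity of the inverse, the main obstacle would be matching nearby atoms under small perturbations; the abstract argument makes that unnecessary. The same reasoning applies verbatim to $\Delta^n(\Delta_{++}(X))$ once it is realized as an open subset via the restriction of $\psi^*$. $\square$
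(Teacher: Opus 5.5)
Your proof is correct and follows the paper's argument: continuity of $\psi$ via the functionals $\pi\mapsto\int f\,d\pi$, passage to $\psi^*$ through the quotient topology, and the compact-to-Hausdorff theorem. The differences are that you verify bijectivity explicitly and prove Hausdorffness directly with a bump function, where the paper cites metrizability of the weak topology. One caution about your closing remark: for $\Delta^n(\Delta_{++}(X))$ the compactness step does not carry over verbatim, since $\Delta_{++}(X)^n$ is not compact. There you should instead restrict the homeomorphism already obtained, which is what the paper's corollary does.
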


\begin{proof}
%Define $\psi:\Delta(X)^n\rightarrow\Delta^n(\Delta(X))$ by $\psi(p_1,\ldots,p_n)=\frac{1}{n}\sum_{i=1}^n \delta_{p_i}$.  

The function $\psi$ is continuous: if $(p_1^k,\ldots,p_n^k)\rightarrow(q_1,\ldots,q_n)$ in $\Delta(X)^n$, then for any bounded continuous $f:\Delta(X)\rightarrow\mathbb{R}$,
\[
  \int f\,d\psi(p_1^k,\ldots,p_n^k) = \frac{1}{n}\sum_{i=1}^n f(p_i^k) \;\longrightarrow\; \frac{1}{n}\sum_{i=1}^n f(q_i) = \int f\,d\psi(q_1,\ldots,q_n).
\]
Then the map $\psi^*:\Delta(X)^n/{\equiv}\;\rightarrow\;\Delta^n(\Delta(X))$ via $\psi^*([(p_1,\ldots,p_n)])=\frac{1}{n}\sum_i\delta_{p_i}$, is continuous by definition.

%\emph{Surjectivity.}  Given any $\pi\in\Delta^n(\Delta(X))$, write its support as $\{r_1,\ldots,r_s\}$ with weights $\pi(\{r_j\})=m_j/n$ (each $m_j\in\{1,\ldots,n\}$, $\sum_j m_j=n$).  List each $r_j$ exactly $m_j$ times to form a tuple $(p_1,\ldots,p_n)$; then $\psi(p_1,\ldots,p_n)=\pi$.

%\emph{Injectivity.}  If $\psi^*([(p_1,\ldots,p_n)])=\psi^*([(q_1,\ldots,q_n)])$, then $\frac{1}{n}\sum_i\delta_{p_i}=\frac{1}{n}\sum_i\delta_{q_i}$ as measures, so the multisets $\{p_1,\ldots,p_n\}$ and $\{q_1,\ldots,q_n\}$ coincide, i.e., $(p_1,\ldots,p_n)\equiv(q_1,\ldots,q_n)$.

\emph{Homeomorphism.}  $\Delta(X)^n$ is compact, so $\Delta(X)^n/{\equiv}$ is compact (the quotient map is continuous and surjective).  The weak topology on $\Delta(\Delta(X))$ is metrizable (hence Hausdorff, see \citet{aliprantis2006} Theorem 15.11), so its subspace $\Delta^n(\Delta(X))$ is Hausdorff.  A continuous bijection from a compact space to a Hausdorff space is a homeomorphism:  \citet{willard1970}, Theorem 17.14.   \hfill$\square $ \end{proof}

\begin{corollary}\label{cor:quotient}
The restriction $\psi^*\big|_{\Delta_{++}(X)^n/\equiv}:\Delta_{++}(X)^n/{\equiv}\;\rightarrow\;\Delta^n(\Delta_{++}(X))$ is a homeomorphism between the respective relative topologies.
\end{corollary}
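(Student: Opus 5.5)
The statement to prove is Corollary~\ref{cor:quotient}: $\psi^*$ restricts to a homeomorphism $\Delta_{++}(X)^n/{\equiv}\to\Delta^n(\Delta_{++}(X))$. The plan is to restrict the homeomorphism of Lemma~\ref{lem:homeo} and check that the restriction lands exactly on the intended subspace, with the intended topologies on both sides.

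\textbf{Step 1: the restriction is a bijection onto $\Delta^n(\Delta_{++}(X))$.} Write $S=\Delta_{++}(X)^n$. Since $S$ is a union of $\equiv$-classes (a permutation of a tuple with full-support coordinates still has full-support coordinates), $Q(S)$ is exactly $S/{\equiv}$. We need $\psi^*(Q(S))=\Delta^n(\Delta_{++}(X))$.
\begin{itemize}
\item[] If every $p_i\in\Delta_{++}(X)$, then $\frac1n\sum_i\delta_{p_i}$ is supported in $\Delta_{++}(X)$.
\item[] Conversely, take $\pi\in\Delta^n(\Delta(X))$ with support in $\Delta_{++}(X)$. Its atoms have weights $m_j/n$. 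Listing each atom $m_j$ times gives a preimage tuple, and all its coordinates lie in $\Delta_{++}(X)$.
\end{itemize}
Injectivity is inherited from $\psi^*$.

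\textbf{Step 2: the topologies match.} A homeomorphism $h:A\to B$ restricts to a homeomorphism $h|_C:C\to h(C)$ for any $C\subseteq A$, where both sides carry their relative topologies. We apply this with $h=\psi^*$ and $C=Q(S)$, so $h(C)=\Delta^n(\Delta_{++}(X))$ by Step 1. The relative topology on $\Delta^n(\Delta_{++}(X))$ coming from $\Delta^n(\Delta(X))$ agrees with its relative weak topology from $\Delta(\Delta(X))$, because relative topologies are transitive.

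\textbf{Main obstacle.} The subtle point is identifying the space $\Delta_{++}(X)^n/{\equiv}$: it could mean the quotient topology of $S$, or the subspace topology of $Q(S)$ inside $\Delta(X)^n/{\equiv}$. We show these coincide.
\begin{itemize}
\item[] $S$ is open in $\Delta(X)^n$ and saturated (a union of $\equiv$-classes).
\item[] Hence $Q(S)$ is open in $\Delta(X)^n/{\equiv}$.
\item[] The restriction of a quotient map to an open saturated subset is again a quotient map onto its image. So the quotient topology on $S/{\equiv}$ equals the subspace topology on $Q(S)$.
\end{itemize}
An alternative route to the same conclusion is that $Q$ is an open map, because $\equiv$ is induced by a finite group action: the saturation of an open set $U$ is $\bigcup_\sigma\sigma(U)$, which is open. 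With the two topologies identified, Step 2 applies and completes the proof.
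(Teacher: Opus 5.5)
Your proposal is correct and takes the same route as the paper. The paper's proof just restricts the homeomorphism $\psi^*$ of Lemma~\ref{lem:homeo} to a subspace, citing a standard textbook fact that a homeomorphism restricts to a homeomorphism onto its image. Your extra checks are sound details the paper leaves implicit: that $\psi^*$ carries $\Delta_{++}(X)^n/{\equiv}$ exactly onto $\Delta^n(\Delta_{++}(X))$, and that the quotient and subspace topologies on $\Delta_{++}(X)^n/{\equiv}$ coincide because $\Delta_{++}(X)^n$ is open and saturated.
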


\begin{proof}See, \emph{e.g.} Theorem 7.5 of \citet{willard1970}.\end{proof}

%%==========================================================================
%% LEMMA 3: QUOTIENT MAP IS OPEN
%f
%==========================================================================

\begin{lemma}\label{lem:openmap}
The quotient map $Q:\Delta(X)^n\rightarrow\Delta(X)^n/{\equiv}$ is open; that is, for any open $U$, $Q(U)$ is open.
\end{lemma}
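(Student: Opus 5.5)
The plan is to use the standard fact that the quotient map of a space by a finite group of homeomorphisms is open. The equivalence relation $\equiv$ is the orbit relation of the symmetric group $S_n$ acting on $\Delta(X)^n$ by permuting coordinates: for $\sigma\in S_n$ define $T_\sigma(p_1,\ldots,p_n)=(p_{\sigma(1)},\ldots,p_{\sigma(n)})$. Each $T_\sigma$ is a homeomorphism of $\Delta(X)^n$ onto itself. It is continuous because it merely reorders coordinates of the product, and its inverse is $T_{\sigma^{-1}}$. Moreover, $(p_1,\ldots,p_n)\equiv(q_1,\ldots,q_n)$ holds exactly when $q=T_\sigma(p)$ for some $\sigma$.

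Now fix an open $U\subseteq\Delta(X)^n$. By the definition of the quotient topology, $Q(U)$ is open if and only if its saturation $Q^{-1}(Q(U))$ is open in $\Delta(X)^n$. I will verify the identity
\[
  Q^{-1}(Q(U))=\bigcup_{\sigma\in S_n}T_\sigma(U).
\]
For the inclusion $\supseteq$: each $T_\sigma(u)$ is equivalent to $u$, so it lies in the saturation. For the inclusion $\subseteq$: if $Q(q)=Q(u)$ with $u\in U$, then $q=T_\sigma(u)$ for some $\sigma$.

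Each $T_\sigma(U)$ is open, being the image of an open set under a homeomorphism. A union of open sets is open, so $Q^{-1}(Q(U))$ is open, and hence $Q(U)$ is open.

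No real obstacle arises. The only point needing care is to invoke the characterization of open sets in the quotient topology, namely that a set is open exactly when its preimage under $Q$ is open, and to check the saturation identity in both directions. The same argument applies verbatim to the restriction of $Q$ to $\Delta_{++}(X)^n$, since $\Delta_{++}(X)^n$ is invariant under every $T_\sigma$. This restricted version is what will be needed later, together with Corollary~\ref{cor:quotient}, to transfer openness and density from $\Delta_{++}(X)^n$ to $\Delta^n(\Delta_{++}(X))$.
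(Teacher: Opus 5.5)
Your proposal is correct and matches the paper's proof essentially line for line. Both write the saturation $Q^{-1}(Q(U))$ as the finite union of the coordinate-permuted copies of $U$, and then conclude using that each permutation map is a homeomorphism of $\Delta(X)^n$ and that $Q(U)$ is open in the quotient topology exactly when its preimage under $Q$ is open.
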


\begin{proof}
Let $U\subseteq\Delta(X)^n$ be open.  We show $Q(U)$ is open in the quotient topology, i.e., $Q^{-1}(Q(U))$ is open in $\Delta(X)^n$.  An element $(p_1,\ldots,p_n)$ belongs to $Q^{-1}(Q(U))$ if and only if its equivalence class intersects $U$, i.e., if and only if $(p_{\sigma(1)},\ldots,p_{\sigma(n)})\in U$ for some permutation $\sigma$.  Therefore
\[
  Q^{-1}(Q(U)) = \bigcup_{\sigma\in S_n} U\circ\sigma,
\]
where $U\circ\sigma = \{(p_{\sigma(1)},\ldots,p_{\sigma(n)}):(p_1,\ldots,p_n)\in U\}$ and $S_n$ denotes the set of permutations from $\{1,\ldots,n\}$ to $\{1,\ldots,n\}$.  Each $U\circ\sigma$ is open (pre-composing a tuple by a fixed permutation is a homeomorphism of $\Delta(X)^n$ to itself), so $Q^{-1}(Q(U))$ is open as a finite union of open sets.  \hfill$\square $ \end{proof}

%%==========================================================================
%% PROPOSITION: OPENNESS AND DENSITY TRANSFER
%%==========================================================================

\begin{proposition}\label{prop:topology}
If $A\subseteq\Delta_{++}(X)^n$ is open and dense (in the subspace topology), then $\psi(A)\subseteq\Delta^n(\Delta_{++}(X))$ is open and dense.
\end{proposition}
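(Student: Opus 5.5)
The plan is to factor $\psi$ restricted to $\Delta_{++}(X)^n$ as $\psi=\psi^*\circ Q$. Corollary~\ref{cor:quotient} tells us that $\psi^*$ is a homeomorphism from $\Delta_{++}(X)^n/{\equiv}$ onto $\Delta^n(\Delta_{++}(X))$, so it suffices to show that $Q(A)$ is open and dense in $\Delta_{++}(X)^n/{\equiv}$. Homeomorphisms preserve both properties, and $\psi(A)=\psi^*(Q(A))$.

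\textbf{Openness.} First I would check that Lemma~\ref{lem:openmap} applies on the subspace $\Delta_{++}(X)^n$. Its argument transfers verbatim. The set $\Delta_{++}(X)^n$ is open in $\Delta(X)^n$ and invariant under coordinate permutations, so any $A$ open in $\Delta_{++}(X)^n$ is also open in $\Delta(X)^n$. Then $Q^{-1}(Q(A))=\bigcup_{\sigma\in S_n}A\circ\sigma$ is a finite union of open sets, hence open.

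There is one subtlety to handle. The topology on $\Delta_{++}(X)^n/{\equiv}$ appearing in Corollary~\ref{cor:quotient} is the relative topology inherited from $\Delta(X)^n/{\equiv}$, not a priori the quotient topology of $\Delta_{++}(X)^n$. I would sidestep this by showing $Q(A)$ is open in the full quotient $\Delta(X)^n/{\equiv}$, which is exactly what Lemma~\ref{lem:openmap} gives. Since $Q(A)\subseteq \Delta_{++}(X)^n/{\equiv}$, it is then open in the relative topology as well.

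\textbf{Density.} Continuity and surjectivity of $Q$ carry dense sets to dense sets. Let $V$ be a nonempty relatively open subset of $\Delta_{++}(X)^n/{\equiv}$. Its preimage under $Q$ restricted to $\Delta_{++}(X)^n$ is a nonempty open subset of $\Delta_{++}(X)^n$; it is open by continuity of $Q$ and nonempty because every class in $V$ has a representative in $\Delta_{++}(X)^n$. By density of $A$ this preimage meets $A$, so $V$ meets $Q(A)$.

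Finally, I would apply $\psi^*$ to conclude that $\psi(A)$ is open and dense in $\Delta^n(\Delta_{++}(X))$. The main obstacle is the bookkeeping over which topology $\Delta_{++}(X)^n/{\equiv}$ carries, handled above by working inside the full quotient. All remaining steps are routine.
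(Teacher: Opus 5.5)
Your proof is correct and follows essentially the paper's route: you factor $\psi=\psi^*\circ Q$, get openness from Lemma~\ref{lem:openmap} together with the homeomorphism of Corollary~\ref{cor:quotient}, and get density from the fact that continuous surjections carry dense sets to dense sets. The paper applies that last fact to $\psi$ directly via $\psi(\overline{A})\subseteq\overline{\psi(A)}$ rather than to $Q$. Your extra care, noting that $A$ is open in $\Delta(X)^n$ and reconciling the relative and quotient topologies on $\Delta_{++}(X)^n/{\equiv}$, supplies bookkeeping the paper leaves implicit.
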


\begin{proof}
%Write $\pi^* = \psi^*\circ Q$, so $\pi^*|_{\Delta_{++}(X)^n} = \psi|_{\Delta_{++}(X)^n}$.{\color{red}CHRIS:  NO, WE HAVE BEEN USING $\pi$ FOR PROBABILITY MEASURES, I DON'T KNOW WHAT THIS MEANS}

\emph{Openness.}  A homeomorphism is by definition an open map, and $Q$ is an open map, so $\psi=\psi^*\circ Q$ is an open map.

%Since $Q$ is open (Lemma~\ref{lem:openmap}), $Q(A)$ is open in $\Delta(X)^n/{\equiv}$.  Since $\psi^*$ restricted to $\Delta_{++}(X)^n/{\equiv}$ is a homeomorphism (Corollary~\ref{cor:quotient}), $\psi^*(Q(A)\cap(\Delta_{++}(X)^n/{\equiv}))$ is open in $\Delta^n(\Delta_{++}(X))$, and this equals $\psi(A)$.

\emph{Density.}  This follows from a standard characterization of continuity and the fact that $\psi$ is surjective.  Let $A\subseteq \Delta_{++}(X)^n$ be dense, so that $\overline{A}=\Delta_{++}(X)^n$.  Then $\Delta^n(\Delta_{++}(X))=\psi(\Delta_{++}(X)^n)=\psi(\overline{A})\subseteq \overline{\psi(A)}\subseteq\Delta^n(\Delta_{++}(X))$, where the first equality owes to surjectivity and the first set inclusion is by \citet{willard1970} Theorem 7.2d.  Thus $\overline{\psi(A)}=\Delta^n(\Delta_{++}(X))$ and $\psi(A)$ is dense.

\end{proof}
%For a continuous surjection $f:B\twoheadrightarrow C$ and a dense set $A\subseteq B$: given any nonempty open $V\subseteq C$, the preimage $f^{-1}(V)$ is open and nonempty (by surjectivity), so $A\cap f^{-1}(V)\neq\varnothing$, and thus $f(A)\cap V\neq\varnothing$.  Apply this with $f = \psi|_{\Delta_{++}(X)^n}$ (which is continuous and surjective onto $\Delta^n(\Delta_{++}(X))$) and $A$ dense in $\Delta_{++}(X)^n$.  \end{proof}

%%==========================================================================
%% PROOF OF MAIN THEOREM
%%==========================================================================

\begin{proof}[Proof of Theorem~\ref{thm:separable}.]

By Proposition~\ref{prop:topology}, it suffices in each case to exhibit an open dense subset of $\Delta_{++}(X)^n$ on which the relevant identification property holds, and then apply $\psi$ to transfer the result to $\Delta^n(\Delta_{++}(X))$.

It is straightforward to show that for $(p^1,\ldots,p^n)\in\Delta_{++}(X)^n$, $\psi(p^1,\ldots,p^n)$ is identified iff whenever $(q^1,\ldots,q^n)\in\Delta_{++}(X)^n$ and for all $E\in\Sigma$, $(q^1(\cdot|E),\ldots,q^n(\cdot|E))\equiv (p^1(\cdot|E),\ldots,p^n(\cdot|E))$, then $(q^1,\ldots,q^n)\equiv (p^1,\ldots,p^n)$.  

\medskip
\noindent\textbf{Case 2: $(X,\sim)$ is separable.}

Let $x^*$ be a separating vertex and let $X_1,X_2$ be the two connected subgraphs of the separation, with $X_1\cap X_2=\{x^*\}$.  Both $X_1$ and $X_2$ have cardinality at least $2$ (since $x^*$ is a non-isolated separating vertex).

\emph{Step 1: Every event is contained in $X_1$ or in $X_2$.}  Suppose $E\in\Sigma$ satisfies $E\cap X_1\neq\varnothing$ and $E\cap X_2\neq\varnothing$.  We claim either $E\cap(X_1\setminus\{x^*\})=\varnothing$ or $E\cap(X_2\setminus\{x^*\})=\varnothing$, i.e., $E\subseteq X_1$ or $E\subseteq X_2$.  Suppose for contradiction that there exist $x_1\in E\cap(X_1\setminus\{x^*\})$ and $x_2\in E\cap(X_2\setminus\{x^*\})$.  Then $\{x_1,x_2\}\subseteq E$, so $x_1\sim x_2$.  But $x_1\in X_1\setminus\{x^*\}$ and $X_1,X_2$ share only $x^*$, so any neighbor of $x_1$ is in $X_1$ or equals $x^*$---contradicting $x_2\in X_2\setminus\{x^*\}$.

\emph{Step 2: The mixing operation $p_{X_1}q$.}  For $p,q\in\Delta_{++}(X)$, define $p_{X_1}q\in\Delta_{++}(X)$ by
\[
p_{X_1}q(x) = \begin{cases}
  \dfrac{p(x) \, q(x^*)}{q(x^*)\,p(X_1)+p(x^*)\,q(X_2)-p(x^*)\,q(x^*)} & x\in X_1\setminus\{x^*\}, \\[6pt]
  \dfrac{p(x^*)\,q(x)}{q(x^*)\,p(X_1)+p(x^*)\,q(X_2)-p(x^*)\,q(x^*)} & x\in X_2\setminus\{x^*\}, \\[6pt]
  \dfrac{p(x^*)\,q(x^*)}{q(x^*)\,p(X_1)+p(x^*)\,q(X_2)-p(x^*)\,q(x^*)} & x=x^*,
\end{cases}
\]
where $p(X_j) = \sum_{x\in X_j} p(x)$. % (One checks the weights sum to $1$.)

The key property of this operation is:
\begin{itemize}
  \item For all $E\in\Sigma$ with $E\subseteq X_1$: $\;p_{X_1}q(\cdot|E) = p(\cdot|E)$.
  \item For all $E\in\Sigma$ with $E\subseteq X_2$: $\;p_{X_1}q(\cdot|E) = q(\cdot|E)$.
\end{itemize}
To verify the first: for $x\in E\subseteq X_1$,
\[
  p_{X_1}q(x|E) = \frac{p_{X_1}q(x)}{\sum_{y\in E}p_{X_1}q(y)}
  = \frac{q(x^*)p(x)/D}{\sum_{y\in E}q(x^*)p(y)/D}
  = \frac{p(x)}{p(E)} = p(x|E),
\]
where $D = q(x^*)p(X_1)+p(x^*)q(X_2)-p(x^*)q(x^*)$.  The second assertion is analogous with $p$ and $q$ exchanged on $X_2$.

\emph{Step 3: Generic non-identification.}  Let
\[
  \Delta_{++}^{X_1,X_2}(X)^n = \bigl\{(p^1,\ldots,p^n)\in\Delta_{++}(X)^n : \forall\,i\neq j,\; p^i(\cdot|X_1)\neq p^j(\cdot|X_1)\;\text{and}\;p^i(\cdot|X_2)\neq p^j(\cdot|X_2)\bigr\}.
\]
This set is open (each of the finite list of conditions $p^i(\cdot|X_j)\neq p^k(\cdot|X_j)$ defines an open set) and dense (each condition fails only on a hyperplane of lower dimension) in $\Delta_{++}(X)^n$.

For any $(p^1,\ldots,p^n)\in\Delta_{++}^{X_1,X_2}(X)^n$, consider the cyclically shifted tuple $(p^1_{X_1}p^2, p^2_{X_1}p^3, \ldots, p^{n-1}_{X_1}p^n, p^n_{X_1}p^1)$.  By the separation of $\Sigma$ shown in Step~1 and the conditional preservation shown in Step~2:
\begin{itemize}
  \item $\psi(p^1_{X_1}p^2,\ldots,p^n_{X_1}p^1)$ and $\psi(p^1,\ldots,p^n)$ induce the same distribution over $\{p(\cdot|E):p\in\text{supp}\}$ for every $E\in\Sigma$.
  \item However, $\psi(p^1_{X_1}p^2,\ldots,p^n_{X_1}p^1)\neq\psi(p^1,\ldots,p^n)$: since each $p^i_{X_1}p^{i+1}$ inherits its $X_1$-marginal from $p^i$ and its $X_2$-marginal from $p^{i+1}$, and by the separation condition these are all distinct, the tuple $(p^i_{X_1}p^{i+1})_i$ is not a permutation of $(p^i)_i$.
\end{itemize}
Hence $\psi(p^1,\ldots,p^n)$ is not identified.  By Proposition~\ref{prop:topology}, $\psi(\Delta_{++}^{X_1,X_2}(X)^n)$ is open and dense in $\Delta^n(\Delta_{++}(X))$, and every element is not identified.

\medskip
\noindent\textbf{Case 1: $(X,\sim)$ is non-separable.}

Since $(X,\sim)$ is non-separable (2-connected), by a classical result of \citet{whitney1931} (see also \citealt{bondy2008}, Theorem~9.2), every pair of edges of $(X,\sim)$ lie on a cycle, where a cycle is a list of distinct vertices $(x^1,\ldots,x^k)$ with $x^i \sim x^{i+1}$ (here $k+1=1$, addition is mod $k$).% $\{x,y\}$ (i.e., any pair with $x\sim y$), there exists a cycle $(x^1,\ldots,x^k)$ of distinct vertices with $x^i\sim x^{i+1}$ (mod $k$) that contains the edge $\{x,y\}$.

\emph{Step 1: Non-identification implies a labeled cycle with unit product.}  Consider the subset
\[
  \Delta_{++}^u(X)^n = \bigl\{(p^1,\ldots,p^n)\in\Delta_{++}(X)^n : p^i\neq p^j\;\forall\,i\neq j\bigr\},
\]
which is open and dense in $\Delta_{++}(X)^n$.

Fix $(p^1,\ldots,p^n)\in\Delta_{++}^u(X)^n$ and suppose $\pi=\psi(p^1,\ldots,p^n)$ is not identified.  Then by definition, there is $(q^1,\ldots,q^n)\in\Delta_{++}(X)^n$ for which $\psi(p^1,\ldots,p^n)\neq \psi(q^1,\ldots,q^n)$, but for all $E\in\Sigma$, $\psi(q^1(\cdot|E),\ldots,q^n(\cdot|E))=\psi(p^1(\cdot|E),\ldots,p^n(\cdot|E))$.  In particular, there is some $t\in\{1,\ldots,n\}$ for which $p^t\notin \{q^1,\ldots,q^n\}$, as otherwise $\psi(p^1,\ldots,p^n)=\psi(q^1,\ldots,q^n)$ (owing to the fact that $(p^1,\ldots,p^n)\in\Delta_{++}^u(X)^n$).  Therefore, there is some $t'\neq t$ and $q\in\{q^1,\ldots,q^n\}$ and a pair of distinct events $E^t\in\Sigma$ and $E^{t'}\in\Sigma$ for which $q(\cdot|E^t)=p^t(\cdot|E^t)$ and $q(\cdot|E^{t'})=p^{t'}(\cdot|E^{t'})$.

%Then there exists $\pi'\neq\pi$ that agrees with $\pi$ on every event-conditional.  Writing $\pi'=\psi(q^1,\ldots,q^n)$ (for some tuple, possibly in a different order), and using the fact that all $p^i$ are distinct, there must exist some index $l$ and some $q\in\Delta_{++}(X)$ with $q\neq p^l$ such that for every $E\in\Sigma$, $q(\cdot|E)$ agrees with some $p^j(\cdot|E)$---but the assignment $E\mapsto j(E)$ is not constant (otherwise $q$ would equal some fixed $p^j$ by Proposition~\ref{prop:connected}).

Conclude that there exists a nonempty subset $L\subseteq\{1,\ldots,n\}$ with $|L|\geq 2$ and a labeled partition $\{\Sigma^l\}_{l\in L}$ of $\Sigma$ such that $q(\cdot|E)=p^l(\cdot|E)$ for all $E\in\Sigma^l$, where the labeling is non-constant.  In particular, for every edge $\{x,y\}$ appearing in some $E\in\Sigma$, we obtain a label $l(\{x,y\})\in L$ with $q(\cdot|\{x,y\})=p^{l(\{x,y\})}(\cdot|\{x,y\})$, and again where the map carrying edges to $L$ is nonconstant.

Let $\{x,y\}$ and $\{z,w\}$ be any distinct pair of edges corresponding to different labels.  Because $(X,\sim)$ is non-separable, by \citet{bondy2008} Theorem 5.2 there is a cycle $\mathcal{C}=(x^1,\ldots,x^k)$ containing each of these edges.  Traversing $\mathcal{C}$ and assigning to each edge $(x^i,x^{i+1})$ the label $\phi(x^i,x^{i+1})=l(\{x^i,x^{i+1}\})\in\{1,\ldots,n\}$ defines a \emph{labeled cycle} $(\mathcal{C},\phi)$.  Since the overall labeling is non-constant, we can choose $\mathcal{C}$ so that $\phi$ is non-constant on its edges.

By Lemma~\ref{lem:consistent} applied to $q$: since $q\in\Delta_{++}(X)$, the collection $\{q(\cdot|E)\}_{E\in\Sigma}$ is trivially probabilistically consistent (it comes from $q$).  For each edge $(x^i,x^{i+1})$ of $\mathcal{C}$, let $l_i = \phi(x^i,x^{i+1})$; then $q(\cdot|\{x^i,x^{i+1}\}) = p^{l_i}(\cdot|\{x^i,x^{i+1}\})$, which gives $q(x^i)/q(x^{i+1})=p^{l_i}(x^i)/p^{l_i}(x^{i+1})$.  Applying the cycle product formula from Lemma~\ref{lem:consistent} to $q$:
\[
  1 = \prod_{i=1}^k \frac{q(x^{i+1})}{q(x^i)} = \prod_{i=1}^k \frac{p^{l_i}(x^{i+1})}{p^{l_i}(x^i)}.
\]
That is, the labeled cycle product
\begin{equation}\label{eq:cycle-unit}
  \prod_{i=1}^k \frac{p^{\phi(x^i,x^{i+1})}(x^{i+1})}{p^{\phi(x^i,x^{i+1})}(x^i)} = 1.
\end{equation}

\emph{Step 2: A generic set on which all labeled cycle products differ from $1$.}  There are only finitely many labeled cycles $(\mathcal{C},\phi)$ (since $X$ and $n$ are finite).  For each such labeled cycle with $\phi$ non-constant, the set
\[
  A_{(\mathcal{C},\phi)} = \biggl\{(p^1,\ldots,p^n)\in\Delta_{++}^u(X)^n : \prod_{i=1}^k \frac{p^{\phi(x^i,x^{i+1})}(x^{i+1})}{p^{\phi(x^i,x^{i+1})}(x^i)} \neq 1\biggr\}
\]
is open and dense in $\Delta_{++}(X)^n$.

Since there are finitely many such labeled cycles, the intersection
\[
  A^* = \bigcap_{(\mathcal{C},\phi):\,\phi\text{ non-constant}} A_{(\mathcal{C},\phi)}
\]
is also open and dense in $\Delta_{++}(X)^n$.

\emph{Step 3: Every element of $A^*$ gives an identified distribution.}  By Step~1, if $(p^1,\ldots,p^n)\in\Delta_{++}^u(X)^n$ and $\psi(p^1,\ldots,p^n)$ is not identified, then there exists a non-constant labeled cycle $(\mathcal{C},\phi)$ for which \eqref{eq:cycle-unit} holds.  Thus, for every $(p^1,\ldots,p^n)\in A^*$, $\psi(p^1,\ldots,p^n)$ is identified.

By Proposition~\ref{prop:topology}, $\psi(A^*)$ is open and dense in $\Delta^n(\Delta_{++}(X))$, and every element is identified.  \end{proof}

\section{(Non-)Identification on $\Delta^S(\Delta_{++}(X))$}

Having characterized identification on the class $\Delta^n(\Delta_{++}(X))$ of $n$-agent distributions, we now turn to the broader class $\Delta^S(\Delta_{++}(X))$ of all finitely-supported distributions, endowed with the weak topology.  Enlarging the domain in this way fundamentally alters the identification landscape.  In the $n$-agent case, non-separability of $\sim_{\Sigma}$ was enough to make identification a generic property. That is, the identified distributions formed an open and dense subset.  On $\Delta^S(\Delta_{++}(X))$, this clean dichotomy breaks down: even under non-separability, the identified and non-identified distributions coexist as dense sets, so neither is topologically negligible.  The reason is that the Pappus-type non-identified distributions constructed in
Theorem~\ref{thm:nonidentify} can be mixed with any target distribution at
an arbitrarily small weight, placing non-identified distributions
arbitrarily close to every point of the space.  Separability, on the other
hand, still delivers a strong conclusion: the set of non-identified
distributions is open and dense, so identification becomes the genuinely
exceptional case.

% Let us consider now $\Delta^S(\Delta_{++}(X))$, again endowed with the weak topology.  

% The following result shows that identification no longer remains a generic property when $\sim_{\Sigma}$; though it is still obtained on a dense set.  

\medskip
\noindent\textbf{Nonseparable case.}  Let $X=\{1,2,3\}$ and 
$\Sigma=\{\{1,2\},\{1,3\},\{2,3\}\}$, whose induced graph is $K_3$
(non-separable), with $X\notin\Sigma$.  Consider
$\pi^*=\delta_{(1/3,1/3,1/3)}\in\Delta^S(\Delta_{++}(X))$.  Since every
binary conditional of $(1/3,1/3,1/3)$ equals $\tfrac{1}{2}$, any
$\pi'\in\Delta^S(\Delta_{++}(X))$ producing the same observables must also
assign all binary conditionals the value $\tfrac{1}{2}$ with probability
one, which forces every atom of $\pi'$ to be $(1/3,1/3,1/3)$.  Hence
$\pi^*$ is identified.

To see that non-identified distributions are dense near $\pi^*$, let
$\pi_0$ and $\pi_0'$ be the two distributions from Example~\ref{ex:pappus}
(the Pappus configuration), which satisfy
$\pi_0\neq\pi_0'$ yet induce the same conditional distributions on every
$E\in\Sigma$.  For each $\varepsilon\in(0,1)$ define
\[
  \pi^\varepsilon
  \;=\;(1-\varepsilon)\,\pi^*+\varepsilon\,\pi_0,
  \qquad
  \tilde\pi^\varepsilon
  \;=\;(1-\varepsilon)\,\pi^*+\varepsilon\,\pi_0'.
\]
Since $\pi_0$ and $\pi_0'$ produce the same observables and
$(1-\varepsilon)\pi^*$ is common to both, $\pi^\varepsilon$ and
$\tilde\pi^\varepsilon$ induce the same conditional distributions on every
$E\in\Sigma$, yet $\pi^\varepsilon\neq\tilde\pi^\varepsilon$.  Thus
$\pi^\varepsilon$ is not identified for any $\varepsilon>0$, while
$\pi^\varepsilon\to\pi^*$ (identified) as $\varepsilon\to 0$.  This
confirms that both identified and non-identified distributions accumulate at
every point of $\Delta^S(\Delta_{++}(X))$.

\medskip
\noindent\textbf{Separable case.}  Let $\Sigma'=\{\{1,2\},\{2,3\}\}$,
whose induced graph is the path $1 \sim_\Sigma 2 \sim_\Sigma 3$ with separating vertex~$2$,
giving components $X_1=\{1,2\}$ and $X_2=\{2,3\}$.  Consider
\[
  \pi \;=\; \tfrac{1}{2}\,\delta_{p}+\tfrac{1}{2}\,\delta_{q},
  \qquad
  p=\tfrac{1}{4}(2,1,1),\quad q=\tfrac{1}{4}(1,1,2).
\]
The binary conditionals are
$p(1|\{1,2\})=\tfrac{2}{3}$,\;$p(2|\{2,3\})=\tfrac{1}{2}$ and
$q(1|\{1,2\})=\tfrac{1}{2}$,\;$q(2|\{2,3\})=\tfrac{1}{3}$.
Constructing the spliced distributions
$p_{X_1}q=\tfrac{1}{5}(2,1,2)$ (taking $p$'s ratio on $\{1,2\}$ and $q$'s
ratio on $\{2,3\}$) and $q_{X_1}p=\tfrac{1}{3}(1,1,1)$ (taking $q$'s
ratio on $\{1,2\}$ and $p$'s ratio on $\{2,3\}$), define
\[
  \pi' \;=\; \tfrac{1}{2}\,\delta_{p_{X_1}q}+\tfrac{1}{2}\,\delta_{q_{X_1}p}
       \;=\; \tfrac{1}{2}\,\delta_{(2/5,1/5,2/5)}
            +\tfrac{1}{2}\,\delta_{(1/3,1/3,1/3)}.
\]
One verifies directly that $\pi$ and $\pi'$ assign the same distribution to
every conditional: under $\{1,2\}$, both produce the multiset
$\{\tfrac{2}{3},\tfrac{1}{2}\}$ with equal weights; under $\{2,3\}$, both
produce $\{\tfrac{1}{2},\tfrac{1}{3}\}$ with equal weights.  Since
$\pi\neq\pi'$, $\pi$ is not identified.

The openness is visible from the construction: any $\pi''$ sufficiently
close to $\pi$ in the weak topology still has one atom near $p$ and one
near $q$, so the splicing argument applies to $\pi''$ as well, yielding a
distinct distribution with the same observables.  Hence the set of
non-identified distributions contains an open neighborhood of $\pi$,
consistent with the second theorem.

\begin{theorem}Suppose that $\sim_{\Sigma}$ is nonseparable and that $X\notin \Sigma$.  Then the sets  $\{\pi\in\Delta^S(\Delta_{++}(X)):\pi\text{ is identified by }\Sigma\}$ and $\{\pi\in\Delta^S(\Delta_{++}(X)):\pi\text{ is not identified by }\Sigma\}$ are both dense in $\Delta^S(\Delta_{++}(X))$.

\end{theorem}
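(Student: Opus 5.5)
Two density claims are needed, and for each I would fix an arbitrary $\pi=\sum_{j=1}^s w_j\delta_{p^j}\in\Delta^S(\Delta_{++}(X))$ and build nearby distributions of the required kind. The weak topology on finitely-supported measures is controlled by two things: small movements of the atoms and small reallocations of the weights, including adding a new component of small total weight. Both constructions below are of this form, so weak convergence will be immediate in each case.

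\textbf{Non-identified distributions are dense.} Since $X\notin\Sigma$, Theorem~\ref{thm:nonidentify} provides distinct $\pi_0,\pi_0'\in\Delta^S(\Delta_{++}(X))$ inducing the same distribution of conditionals on every $E\in\Sigma$. For $\varepsilon\in(0,1)$ I set $\pi^\varepsilon=(1-\varepsilon)\pi+\varepsilon\pi_0$ and $\tilde\pi^\varepsilon=(1-\varepsilon)\pi+\varepsilon\pi_0'$. For each $E\in\Sigma$, the pushforward of a measure under $p\mapsto p(\cdot|E)$ is linear in the measure, so $\pi^\varepsilon$ and $\tilde\pi^\varepsilon$ have identical observables. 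They are distinct because $\pi^\varepsilon-\tilde\pi^\varepsilon=\varepsilon(\pi_0-\pi_0')\neq 0$. Hence $\pi^\varepsilon$ is not identified. Finally, $\int f\,d\pi^\varepsilon\to\int f\,d\pi$ for every bounded continuous $f$, so $\pi^\varepsilon\to\pi$. This is exactly the argument already given for $\pi^*$, carried out for an arbitrary base point.

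\textbf{Identified distributions are dense.} I would perturb the atoms of $\pi$ to points $r^1,\dots,r^s$ chosen so that, for every $E\in\Sigma$, the conditionals $r^1(\cdot|E),\dots,r^s(\cdot|E)$ are pairwise distinct. Call this the \emph{injectivity condition}. Each failure $r^i(\cdot|E)=r^j(\cdot|E)$ is a proper closed algebraic condition, so the injectivity condition holds on an open dense subset of $\Delta_{++}(X)^s$, and I can pick the $r^j$ arbitrarily close to the $p^j$. The weights $w_j$ are kept unchanged, so $\pi_r=\sum_j w_j\delta_{r^j}$ is weakly close to $\pi$.

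I would then aim to show that $\pi_r$ is identified. Suppose $\pi'=\sum_k v_k\delta_{q^k}$ produces the same observables as $\pi_r$. By the injectivity condition, for each $E\in\Sigma$ the observed distribution of conditionals has exactly $s$ atoms, with masses $w_1,\dots,w_s$. Every atom $q^k$ must therefore satisfy $q^k(\cdot|E)=r^{l_k(E)}(\cdot|E)$ for a label $l_k(E)\in\{1,\dots,s\}$. If $l_k$ is constant in $E$, then Proposition~\ref{prop:connected} forces $q^k=r^{l_k}$. If $l_k$ is non-constant, I repeat Step~1 of Case~1 in the proof of Theorem~\ref{thm:separable}. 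Non-separability and Lemma~\ref{lem:consistent} produce a cycle carrying a non-constant labeling $\phi$ with labeled cycle product equal to $1$. I would exclude this by additionally placing $(r^1,\dots,r^s)$ in the finite intersection of the open dense sets $A_{(\mathcal C,\phi)}$ from Step~2 of that proof, taken with $n=s$. So every $q^k$ lies in $\{r^1,\dots,r^s\}$. Matching masses on any single $E$ then gives $\pi'(\{r^j\})=w_j$, so $\pi'=\pi_r$, and $\pi_r$ is identified.

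\textbf{Main obstacle.} The delicate step is that $\pi'$ is an arbitrary simple measure, not an $s$-agent one, so its atoms and weights are unconstrained a priori. The argument avoids this difficulty because the cycle condition is applied atom by atom to each individual $q^k$. That requires only that each $q^k(\cdot|E)$ lies in the finite set $\{r^j(\cdot|E)\}$, which comes from matching supports, with no need to count agents. I also need to confirm that the sets $A_{(\mathcal C,\phi)}$ are open and dense when the weights are unequal. This holds because those sets are defined purely in terms of the atom positions, so the weights play no role.
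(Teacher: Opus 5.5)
Your proof is correct. The non-identified half is the paper's own argument: you mix an arbitrary base point with the pair from Theorem~\ref{thm:nonidentify}.

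For the identified half you take a genuinely different route.

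\emph{The paper's route.} It never works with a general simple measure directly. It notes that $\bigcup_n\Delta^n(\Delta_{++}(X))$, the distributions with rational weights, is dense in $\Delta^S(\Delta_{++}(X))$. It then uses Case~1 of Theorem~\ref{thm:separable} to get that identified $n$-agent distributions are dense in each $\Delta^n(\Delta_{++}(X))$, and finishes with a closure argument. This buys brevity by treating Theorem~\ref{thm:separable} as a black box.

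\emph{What that black box leaves tacit.} The proof of Theorem~\ref{thm:separable} explicitly handles only $n$-agent alternatives (the ``straightforward'' reduction at its start). Identification, however, is defined against every $\pi'\in\Delta^S(\Delta_{++}(X))$.

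\emph{Your route.} You keep the weights of $\pi$ fixed, possibly irrational, and perturb only the atom locations into the generic set $A^*$ (with $n=s$). You then check identification against an arbitrary simple $\pi'$, atom by atom. Your observation that the cycle argument needs only $q^k(\cdot|E)\in\{r^j(\cdot|E)\}$, with no counting of agents, is exactly what justifies that the sets $Y^n$ in the paper's proof consist of distributions identified in the full sense. So your route makes that step explicit. It also gives a slightly stronger conclusion: identification can be reached by moving atoms alone, for arbitrary weights, with no rational approximation of the weights.

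\emph{One correction.} Since $\Sigma\subseteq 2^X\setminus\{\varnothing\}$ may contain singletons, and for $E=\{x\}$ every conditional equals $\delta_x$, your injectivity condition cannot hold for such $E$. Your claim that each failure is a proper closed condition is therefore false there. Restrict the injectivity condition, the labels $l_k(E)$, the test of whether $l_k$ is constant, and the mass-matching step to events with $|E|\ge 2$. This is harmless: singletons add no edges to $(X,\sim)$, Proposition~\ref{prop:connected} needs only the non-singleton events, and a non-singleton event exists because $(X,\sim)$ is connected with $|X|\ge 3$.
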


\begin{proof}To save on notation, let $X^n = \Delta^n(\Delta_{++}(X))$ and let $Y^n = \{\pi\in\Delta^n(\Delta_{++}(X)):\pi\text{ is identified by }\Sigma\}$.  Clearly $\bigcup_n X^n$ is dense in $\Delta^S(\Delta_{++}(X))$.  Similarly, Theorem~\ref{thm:separable} shows that $Y^n$ is relatively weakly dense in $X^n$, so we know that $X^n \subseteq \overline{Y^n}$.  Therefore, $\bigcup_n X^n \subseteq \bigcup_n \overline{Y^n}$.  But clearly since for each $m$, $\overline{Y^m}\subseteq \overline{\bigcup_n Y^n}$, we get that $\bigcup_n X^n\subseteq\bigcup_n \overline{Y^n}\subseteq \overline{\bigcup_n Y^n}$, so that $\Delta^S(\Delta_{++}(X))=\overline{\bigcup_n X^n}\subseteq \overline{ \bigcup_n Y^n}$.  The result follows as every element of $\bigcup_n Y^n$ is identified.

For the density of the non-identified distributions, let $\pi$ and $\pi'$ be a pair of distributions as described in Theorem~\ref{thm:nonidentify}.  Then for any $\pi^*\in\Delta^S(\Delta_{++}(X))$, and for any $\epsilon^m\rightarrow 0$, $(1-\epsilon^m)\pi^*+\epsilon^m \pi\rightarrow \pi^*$, but $(1-\epsilon^m)\pi^* +\epsilon^m \pi'$ is a distinct member of $\Delta^S(\Delta_{++}(X))$ inducing the same distribution over marginals in $\Sigma$.  \end{proof}

\begin{theorem}Suppose that $\sim_{\Sigma}$ is separable.  Then the set $\{\pi\in\Delta^S(\Delta_{++}(X)):\pi\text{ is not identified by }\Sigma\}$ is open and dense in $\Delta^S(\Delta_{++}(X))$. \end{theorem}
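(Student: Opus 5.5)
The plan is to prove density and openness separately, using the splicing operation $p_{X_1}q$ from Case~2 of the proof of Theorem~\ref{thm:separable}. Fix a separating vertex $x^*$ with sides $X_1,X_2$. By Step~1 of that proof, every $E\in\Sigma$ lies in $X_1$ or in $X_2$. For $\pi\in\Delta^S(\Delta_{++}(X))$, let $\pi_{X_j}$ denote the pushforward of $\pi$ under $p\mapsto p(\cdot|X_j)$.

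\emph{Density.} Since $\bigcup_n\Delta^n(\Delta_{++}(X))$ is dense in $\Delta^S(\Delta_{++}(X))$, density follows from Theorem~\ref{thm:separable}(2) by the same closure argument the paper uses for the non-separable theorem.

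\emph{A sufficient condition for non-identification.} Let $N$ be the set of $\pi$ for which neither $\pi_{X_1}$ nor $\pi_{X_2}$ is a point mass.
\begin{itemize}
\item First I show that every $\pi\in N$ has two atoms $p,q$ with $p(\cdot|X_1)\neq q(\cdot|X_1)$ and $p(\cdot|X_2)\neq q(\cdot|X_2)$. Take atoms $p,q$ differing on $X_1$ and $r,s$ differing on $X_2$. If $p,q$ agree on $X_2$, then $r$ differs on $X_2$ from $p$, say. If $r$ also differs from $p$ on $X_1$, the pair $(p,r)$ works. Otherwise $r$ agrees with $p$ on $X_1$, hence differs from $q$ on $X_1$; it also differs from $q$ on $X_2$, so $(q,r)$ works.
\item Given such a pair, let $m=\min(\pi(p),\pi(q))$ and replace $m\delta_p+m\delta_q$ by $m\delta_{p_{X_1}q}+m\delta_{q_{X_1}p}$. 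By Step~2, every $\Sigma$-observable is unchanged.
\item The new atom $p_{X_1}q$ is neither $p$ nor $q$, because its $X_1$-conditional comes from $p$ and its $X_2$-conditional from $q$, and these differ from $q$ and $p$ respectively. Hence the result is a different measure, so $\pi$ is not identified.
\end{itemize}

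\emph{$N$ is open.} If $\pi$ has atoms with distinct $X_1$-conditionals $c\neq c'$, pick disjoint open sets around $c$ and $c'$. The preimages of these sets under the continuous map $p\mapsto p(\cdot|X_1)$ are open and have positive $\pi$-mass. By the portmanteau theorem, every weakly nearby measure also gives them positive mass, and so has two distinct $X_1$-conditionals. The same argument applies to $X_2$.

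\emph{Openness of the whole non-identified set.} What remains is a non-identified $\pi\notin N$, say with $\pi_{X_1}=\delta_c$. Any nearby $\pi''$ is either in $N$, and so non-identified, or has $\pi''_{X_1}$ a point mass; in the latter case $\pi''_{X_1}=\delta_{c''}$ with $c''$ near $c$, and $\pi''_{X_2}$ is near $\pi_{X_2}$ (the case $\pi''_{X_2}$ a point mass is symmetric).

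I would then use the transport map $T_{c''\to c}$, which resplices each atom so that its $X_1$-conditional becomes $c$. Concretely, each atom $r$ is replaced by $\hat c_{X_1}r$ for a fixed $\hat c$ with $\hat c(\cdot|X_1)=c$. This map is a bijection between the measures whose $X_1$-conditionals all equal $c''$ and those whose $X_1$-conditionals all equal $c$. It is weakly continuous, it preserves all observables for events contained in $X_2$, and it shifts the observables for events contained in $X_1$ uniformly. Consequently it preserves identification status, and $T(\pi'')$ lies near $\pi$.

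This reduces the question to one on the $X_2$ side: the observable data are the conditionals on $X_2$, with event family $\Sigma_2=\{E\in\Sigma:E\subseteq X_2\}$, together with the forced value $c$ on $X_1$. I would prove that reduced statement, and the theorem, by strong induction on $|X|$ over the block decomposition.
\begin{itemize}
\item If $(X_2,\sim_{\Sigma_2})$ is separable, apply the induction hypothesis.
\item If $(X_2,\sim_{\Sigma_2})$ is a non-separable block, the non-identified set need not be open (see the non-separable theorem). So I would check that this configuration, with $\pi$ non-identified and $\pi_{X_1}$ a point mass, is consistent with the claim; if necessary, I would choose the separating vertex or the splitting so that it cannot arise.
\end{itemize}

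This last step is the main obstacle. It is exactly where openness might fail: for example, $\pi$ concentrated on one $X_1$-conditional and carrying a Pappus configuration on a non-separable $X_2$ block. I would first test this potential counterexample directly before completing the induction.
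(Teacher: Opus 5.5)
Your first steps are sound and are essentially the paper's argument. The combinatorial step shows that $N$ is exactly the set of $\pi$ with two atoms whose conditionals differ on both $X_1$ and $X_2$, which is the set the paper works with. The splicing is the paper's. Your portmanteau argument for openness of $N$ replaces the paper's Urysohn functions on small neighborhoods $U_p,U_q$.

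The gap is the last step. You never show that a non-identified $\pi\notin N$ has a neighborhood of non-identified measures. No induction can show it, because the configuration you flag really is a counterexample.

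\emph{The counterexample.} Take $X=\{0,1,2,3\}$ and $\Sigma=\{\{0,1\},\{1,2\},\{1,3\},\{2,3\}\}$. This is separable at $1$, with $X_1=\{0,1\}$ and $X_2=\{1,2,3\}$. For $r\in\Delta_{++}(X_2)$, let $e(r)\in\Delta_{++}(X)$ be the belief with $e(r)(0)=e(r)(1)$ and $e(r)(\cdot|X_2)=r$.

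Let $p^i,q^i$ be the Pappus beliefs of Example~\ref{ex:pappus}. Then $\pi=\frac{1}{3}\sum_i\delta_{e(p^i)}$ is not identified, because $\frac{1}{3}\sum_i\delta_{e(q^i)}$ has the same observables.

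Now choose $(\tilde p^1,\tilde p^2,\tilde p^3)$ arbitrarily close to $(p^1,p^2,p^3)$. Take it inside the set $A^*$ of Case~1 of the proof of Theorem~\ref{thm:separable}, applied to the triangle on $X_2$, with pairwise distinct conditionals on each of the three edges. Then $\tilde\pi=\frac{1}{3}\sum_i\delta_{e(\tilde p^i)}$ is weakly close to $\pi$, and it is identified:
\begin{itemize}
\item Suppose $\rho$ has the same observables. The event $\{0,1\}$ forces $r(0)=r(1)$ for every atom $r$ of $\rho$.
\item Each edge conditional of $r$ inside $X_2$ equals that of some $\tilde p^{l}$. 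A non-constant choice of labels would produce a non-constant labeled triangle with unit product, which $A^*$ rules out.
\item Hence $r=e(\tilde p^{l})$ for a single $l$, and the distinct conditionals force every weight to be $\frac{1}{3}$.
\end{itemize}
So identified measures accumulate at a non-identified one, and the non-identified set is not open.

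\emph{What can be proved, and how to finish.} The paper's proof establishes something weaker than the literal wording. It shows that $N$ is open and dense and that every member of $N$ is non-identified. So non-identification is generic in the sense of Theorem~\ref{thm:separable}. It never addresses non-identified $\pi\notin N$.

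To complete your argument along those lines, drop the transport map and the induction, and prove density of $N$ itself. Your appeal to part~2 of Theorem~\ref{thm:separable} only gives density of the larger non-identified set. Density of $N$ is immediate:
\begin{itemize}
\item $N\neq\varnothing$ because $|X_1|,|X_2|\geq 2$.
\item For any $\pi$ and any $\pi^*\in N$, the mixture $(1-\varepsilon)\pi+\varepsilon\pi^*$ lies in $N$, since its support contains that of $\pi^*$.
\item This mixture converges weakly to $\pi$ as $\varepsilon\to 0$.
\end{itemize}
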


\begin{proof}Let $\pi\in\Delta^S(\Delta_{++}(X))$ be in the subset for which there exists $p,q\in\mbox{supp}\pi$ for which $p(\cdot|X_1)\neq q(\cdot|X_1)$ and $p(\cdot|X_2)\neq q(\cdot|X_2)$, where $X_1$ and $X_2$ are as in the definition of separable.  Any such $\pi$ can be written as $\alpha(\frac{1}{2}\delta_p+\frac{1}{2}\delta_q)+(1-\alpha)\pi'$ for some $\pi'$; as in the proof of Theorem~\ref{thm:separable}, then $\alpha(\frac{1}{2}\delta_{p_{X_1}q}+\frac{1}{2}\delta_{q_{X_1}p})+(1-\alpha)\pi'$ is distinct from $\pi$ but induces the same measure over marginals.

It remains to establish that this set is open and dense.  Density is clear, as we may let $\pi\in\Delta^S(\Delta_{++}(X))$ be any distribution, and let $\pi^*$ any distribution for which there exists $p,q\in\mbox{supp}\pi^*$ for which $p_{X_1}\neq q_{X_1}$ and $p_{X_2}\neq q_{X_2}$.  Then for $\epsilon_m\rightarrow 0$, $(1-\epsilon_m)\pi+\epsilon_m \pi^*\rightarrow \pi$, where every element of the sequence is not identified.

Openness follows as for any such distribution $\pi$, we may consider open sets $U_p\subseteq \Delta_{++}(X)$ and $U_q\subseteq \Delta_{++}(X)$ for which $p\in U_p$, $q\in U_q$, and $\pi(U_p)=\pi(p)$, $\pi(U_q)=\pi(q)$ (that is, $U_p$ contains no member of the support of $\pi$ aside from $p$, similarly for $q$, this can be done as the support of $\pi$ is finite).  We may choose these neighborhoods small enough so that for all $p'\in U_p$ and $q'\in U_q$, $p'(\cdot|X_1)\neq q'(\cdot|X_1)$ and $p'(\cdot|X_2)\neq q'(\cdot|X_2)$.

By Urysohn (Theorem 15.6 of \citet{willard1970}), we can let $f_p$ be a continuous function equal to $1$ on $p$ and equal to $0$ on $\Delta(X)\setminus U_p$, similarly $f_q$.  Then the set of $\{\pi':\int f_p d\pi'>0\}\cap\{\pi':\int f_q d\pi'>0\}$ is a basic open neighborhood of the weak topology which contains $\pi$ and clearly any $\pi'$ in this neighborhood must have members $p'\in U_p$ and $q'\in U_q$ in its support, and thus, since $p'(\cdot|X_1)\neq q'(\cdot|X_1)$ and $p'(\cdot|X_2)\neq q'(\cdot|X_2)$, $\pi'$ is not identified.  

\end{proof}

\section{Conclusion}

Our result could also be useful for various decision theories under suitable reinterpretations. While our model studies Bayesian updating, the same mathematical foundation forms the canonical model of stochastic choice due to \citet{luce1959individual}.  See \citet{strzalecki2025} for a modern and up to date treatment. 

Under this interpretation, a distribution over Bayesian priors is equivalent to a distribution of a population's Luce weights.  The anonymous per-event posteriors correspond precisely to anonymous choice frequencies over distinct menus. Our identification question in this context is: when can an analyst recover the true population distribution over Luce weights by observing anonymous menu-by-menu choice frequencies? Following Proposition~\ref{prop:connected}, as long as the observed menu-induced graph is connected, an analyst facing a single individual can recover Luce weights up to a rescaling. Our result, under the population Luce model, shows that connectedness is not enough. Generic identification requires non-separability of the menu-induced graph (in the finite type populations); separability generically leads to non-identification. 

Under the stochastic choice interpretation, our primitive is what the discrete choice literature calls a mixed logit: a mixing
measure over Luce (logit) weights \citep{mcfaddentrain2000}.  The
decision-theoretic literature has largely asked whether some mixing measure
rationalizes the data, and how flexible the resulting model class is.
\citet{saito2018} axiomatizes the mixed logit model, and \citet{chang2024}
characterize exactly when random-coefficient models are rich enough to
approximate arbitrary random utility models; see also \citet{lu2025} on telling mixed logit apart from apart from pure characteristics models.  These
questions belong to a broader body of work characterizing the logit and Luce
family, including nested logit \citep{kovach2022nested}, focality-adjusted and perception-adjusted Luce rules \citep{kovach2022focal,
echenique2018palm}, logit as perturbed utility maximization \citep{fudenberg2015}, logit choice in dynamic preference discovery
\citep{cerreia2023},  and logit as AI's choices \citep{suleymanov2026revealed}.  We ask instead when the mixing measure is uniquely recovered, and our data
is different too.  The mixed logit literature works with the population
choice frequency on each menu: a single number for each alternative, equal
to the average choice probability once weight vectors are mixed together.
We observe more than that.  For each menu, we see the entire population
distribution of Luce choice probabilities, not just its average, so our
data refines theirs.  What we do not have is a way to link an agent's
choice probability on one menu to the same agent's choice probability on
another, since our data is anonymous across menus.  Our results show
whether the mixing measure can still be recovered without that link, and
the answer depends on the graph induced by the family of menus.

Our exercise also relates to the literature studying identification when
the data depart from the standard stochastic choice primitive.  The dataset
in \citet{azrieli2025} is a pair of distributions, one over menus and one
over alternatives: the share of choices made in each menu and the aggregate
frequency with which each alternative is chosen.  Conditional choice
frequencies are not observed at all.  Our departure runs in the opposite
direction: we observe strictly more than the conditional choice frequencies,
namely their full distribution across the population, and what is missing is
only the assignment of these conditionals to agents across menus. \citet{dardanoni2020} identify the distribution of cognitive characteristics
in a population from aggregate choice shares on a single menu.  Their
strongest result uses the joint distribution of choices across three
repetitions of the menu, and \citet{dardanoni2023} take this joint data as
the primitive, now over a collection of menus.  In their data, no individual
is identified, but the joint shares still record each agent's pattern of
choices.  In our data, even this information is missing: we observe only the
menu-by-menu marginals, and no agent's observations are linked across menus.

This failure of identification is not the familiar non-identification of the random utility model (RUM). RUM is known to be non-identified with four or more alternatives even when the analyst observes choice from all menus \citep{fishburn1998stochastic, turansick2022identification, chambers2025limits, caradonna2024identifying}, and this difficulty is unrelated to the structure of the menu family. In contrast, the population Luce model is identified once we observe all budgets by construction. The non-identification we document is a distinct phenomenon, arising specifically from anonymity across menus combined with a menu family whose graph is separable.

The welfare consequences are direct and worth stating plainly. In the population Luce model, welfare comparisons across agents, or statements about the distribution of welfare in the population, depend on knowing which weight vector a given agent holds. When the distribution of weights is not identified, these comparisons cannot be made from the data alone. Two populations that are observationally equivalent on every menu can assign entirely different welfare ranks to the same alternatives, and can disagree on which agents benefit from a policy that interferes with the choice environment.

The idea of updating a distribution over beliefs is also closely related to the prior-by-prior updating rule in the multiple priors world of \citet{gilboa1989}.  \citet{pacheco2002} and \citet{faro2019} characterize this rule axiomatically, while \citet{epstein2003} investigates a version of this rule applied to specific subclasses of events.  \citet{chambers2010} uses the same updating rule on ``objective'' sets of priors.  In our model with a finite set of ``priors,'' we can consider the support of the distribution, and the updating rule on supports is precisely the prior-by-prior updating rule.  In the multiple priors model, however, for identification purposes, one takes the set of priors to be convex---in particular, there is never room for ``observing'' priors that might be in the convex hull of the other priors.  So the question of recovering a set of priors from their Bayesian updates is fundamentally different.

Finally, we have not investigated the actual testable content of the model we consider here.  The idea here is to assume that if the distribution over $\Delta_{++}(X)$ is observable, it should be fairly easy to axiomatize the model, simply by definition.  Going further than this might prove difficult.  If we were to imagine that we only observed the distributions over $\Delta_{++}(E)$ for each $E\in \Sigma$, then, depending on whether we require our distributions to be $n$-agent distributions, or just more general Borel distributions, we may be able to answer the question by leveraging a version of Strassen's Theorem 7 \citep{strassen1965} (perhaps after taking a log transformation of the probabilities).  We leave this to future research.

%%==========================================================================
%% BIBLIOGRAPHY
%%==========================================================================

\newpage

\bibliographystyle{plainnat}
\bibliography{reference.bib}

\end{document}